\theoremstyle{definition}
\newtheorem{theorem}{\textbf{Theorem}}
\newtheorem{definition}[theorem]{\textbf{Definition}}
\newtheorem{conjecture}[theorem]{\textbf{Conjecture}}
\newtheorem{proposition}[theorem]{\textbf{Proposition}}
\newtheorem{remark}[theorem]{\textbf{Remark}}
\newtheorem{lemma}[theorem]{\textbf{Lemma}}
\newtheorem{corollary}[theorem]{\textbf{Corollary}}
{ \bgroup
  \addtolength\abovedisplayshortskip{#1}
  \addtolength\abovedisplayskip{#1}
  \addtolength\belowdisplayshortskip{#1}
  \addtolength\belowdisplayskip{#1}}
{\egroup\ignorespacesafterend}
\theoremstyle{definition}
\newtheorem{theorem}{\textbf{Theorem}}
\newtheorem{proposition}[theorem]{\textbf{Proposition}}
\newcommand{\tabincell}[2]{\begin{tabular}{@{}#1@{}}#2\end{tabular}}
\begin{document}
\bibliographystyle{IEEEtran}


\title{Topology and Admittance Estimation:\\ Precision Limits and Algorithms}

\author{
Yuxiao~Liu,~\IEEEmembership{Student Member,~IEEE,}
Ning~Zhang,~\IEEEmembership{Senior Member,~IEEE,}\\
Qingchun~Hou,~\IEEEmembership{Student Member,~IEEE,}
Audun~Botterud,~\IEEEmembership{Member,~IEEE,}
and~Chongqing~Kang,~\IEEEmembership{Fellow,~IEEE}

  



}

\maketitle


\begin{abstract}
Distribution grid topology and admittance information are essential for system planning, operation, and protection.
In many distribution grids, missing or inaccurate topology and admittance data call for efficient estimation methods.
However, measurement data may be insufficient or contaminated with large noise, which will introduce fundamental limits to the estimation accuracy.
This work explores the theoretical precision limits of the topology and admittance estimation (TAE) problem, with different measurement devices, noise levels, and the number of measurements.
On this basis, we propose a conservative progressive self-adaptive (CPS) algorithm to estimate the topology and admittance.
Results on IEEE 33 and 141-bus systems validate that the proposed CPS method can approach the theoretical precision limits under various measurement settings.
\end{abstract}
\begin{IEEEkeywords}
Distribution grids, topology estimation, admittance estimation, data-driven, Newton-Raphson method.
\end{IEEEkeywords}

\IEEEpeerreviewmaketitle


\section{Introduction}

The vast integration of distributed energy resources and electric vehicles raise both economic and security concerns to modern distribution grids~\cite{muruganantham2017challenges}.
Smart grid operations such as state estimation (SE), demand response, voltage control, and pricing are increasingly implemented at the distribution level.
However, accurate grid topology and line admittance, which are the pre-requisite for the above operations, are often unavailable in many medium and low-voltage distribution grids.
Therefore, efficient and accurate topology and admittance estimation (TAE) is essential in future distribution grids.

Many efforts have been made to estimate distribution grid topology and admittance.
Most of the works use the data collected from the advanced metering infrastructures (AMIs) or the micro-phasor measurement units ($\mu$PMUs).
Some researches that only focus on topology identification use the statistical information of voltage magnitudes, such as the covariances~\cite{bolognani2013identification}, the mutual information~\cite{weng2016distributed}, and the conditional independence test~\cite{deka2016estimating}.
Other researches address the joint estimation of topology and the line admittance, by formulating the problem as the maximum likelihood estimation~\cite{yu2017patopa,yu2018patopaem,zhang2020topology,moffat2019unsupervised,li2020learning}.
The above works often make some assumptions to simplify the problem, including uncorrelated nodal power/current injections~\cite{deka2016estimating,weng2016distributed,deka2020joint,park2020learning}, radial network topologies~\cite{bolognani2013identification,park2020learning,deka2020joint,miao2019distribution,zhao2020full,moffat2019unsupervised}, sufficient phasor measurements~\cite{yu2017patopa,moffat2019unsupervised,li2020learning,yu2018patopaem}, or accurate voltage measurements~\cite{zhang2020topology}.
Those assumptions may hold in some cases, but hinder the practical implementation under more general distribution system cases.
For instance, the power/current injections can be highly correlated because of similar electricity consumption or rooftop solar PV generation patterns~\cite{xu2019data}.
The distribution grids may contain loops or even be heavily meshed~\cite{sandraz2014energy}.
The measurement devices may not be sufficient in the distribution level, especially for the $\mu$PMUs that contain phasor information~\cite{bhela2017enhancing}.

In fact, the TAE in distribution grids is extremely challenging because of the poor measurements and the non-convexity of power flow models.
It still remains an open question whether the distribution grid is observable with some specific measurement devices and measurement precisions.
Some recent works discuss the fundamental limits of the TAE problem.
Moffat \textsl{et al.} proved that the admittance matrix cannot be estimated without any prior knowledge when the system contains some zero power injection buses~\cite{moffat2019unsupervised}.
Still, they do not consider any physical knowledge (e.g. the admittance matrix is symmetric and sparse) that could largely improve the estimation results.
Li \textsl{et al.} provided a theoretical relationship among the number of measurements, the prior knowledge, and the probability of estimation error~\cite{li2020learning}.
They proved a worst-case sample complexity for the linear graph learning task, i.e., how much data is required to guarantee a certain accuracy (with probability).
The above researches~\cite{moffat2019unsupervised,li2020learning} only address the estimation limits for linear learning tasks.
That is, the voltage and current magnitudes and angles are assumed to be available at all buses so that the estimation problem can be formulated by the linear Ohm's~law.
Grotas \textsl{et al.} proposed the lower bound for the transmission grids under the DC power flow (DCPF) setting~\cite{grotas2019power}.
The DCPF approximation simplifies the problem but introduces great error for the estimation problem.
Further, the DCPF approximation even incurs larger error in distribution grids, because of the high ratio of R/X and also requires the phasor measurements.

This work addresses a more practical setting: to derive the theoretical precision limits with different measurement devices and measurement precisions.
We first derive the Cramér-Rao Lower Bound (CRLB)~\cite{kay1993fundamentals} that can evaluate the best possible TAE precisions under given measurement devices, noise levels, and the number of data.
Then, we show that to design an efficient algorithm for the TAE problem is mathematically difficult.
The TAE problem is far more memory consuming, ill-conditioned, and non-convex compared with the traditional SE problem~\cite{abur2004power}.
Some traditionally well-behaved methods can easily diverge (the Newton's method) or suffer from an extremely slow converge speed (gradient-based methods), even under a simple 3-bus case.
Furthermore, we propose an algorithm for the TAE problem.
The method contains a first-order optimizer and a second-order optimizer in order to combine stability (or conservative) and fast convergence (or progressive) features with one method.
The method also contains a hybrid line search strategy to self-adaptively tune the weights of the first-order and second-order optimizer.
To this end, we name the proposed algorithm the conservative progressive self-adaption (CPS) algorithm.
Case studies on IEEE 33 and 141-bus systems show that the proposed CPS method can approach the theoretical precision limits under different experimental settings.

This work focuses on systems with a balanced power flow setting and Gaussian measurement noise.
The framework can be extended to address the unbalanced three-phase estimation problem and some non-Gaussian noise.
We also do not consider active injection approaches~\cite{liserre2007grid,cavraro2019inverter}, because they require sufficient active devices (e.g. smart inverters) and may not be compliant with the grid code~\cite{miao2019multi}.


In short, the contributions of this work are as follows:
\begin{enumerate}[1)]
  \item We quantify the theoretical precision limits for the distribution grids' TAE problem. The proposed method identifies precision limits given different measurement devices, noise levels, number of measurements, and prior topology knowledge. 
  \item We propose the CPS algorithm that is specially designed for the memory-heavy, ill-conditioned, and non-convex estimation problem. The method can approach the theoretical precision limits under a lack of voltage angle measurements and different levels of measurement noise.
\end{enumerate}

The remainder of this paper is organized as follows.
Section~\ref{sec_crlb} introduces how to evaluate the precision limits for the TAE problem.
Section~\ref{sec_cps1} demonstrates the proposed CPS algorithm.
Section~\ref{sec_case_study} provides case studies.
Finally, section~\ref{sec_conclusions} draws the conclusions.

\section{Precision Limits for Topology and Admittance Estimation}
\label{sec_crlb}

\subsection{Problem Formulation}
\label{sec_problem_formulation}
The TAE of a distribution grid can be formulated as a maximum likelihood problem using the AC power flow equations.
\begin{subequations}\label{eq_maxlike}
  \begin{equation}\label{eq_maxlike_a}
    \begin{aligned}
    \min _{G_{ij},B_{ij},\hat{V}_i^t,\hat{\theta}_i^t} &\sum_{t=1}^{T} 
    \sum_{i\in \!\!\mathcal{M}_{P}}\frac{(P_i^t\!-\!\hat{P}_i^t)^2}{\sigma_{P_i}^2}
    \!+\!
    \sum_{i\in \!\!\mathcal{M}_{Q}}\frac{(Q_i^t\!-\!\hat{Q}_i^t)^2}{\sigma_{Q_i}^2}\\
    \!+\!
    &\sum_{i\in \!\!\mathcal{M}_{V}}\frac{(V_i^t\!-\!\hat{V}_i^t)^2}{\sigma_{V_i}^2}
    \!+\!
    \sum_{i\in \!\!\mathcal{M}_{\theta}}\frac{(\theta_i^t\!-\!\hat{\theta}_i^t)^2}{\sigma_{\theta_i}^2},
    \end{aligned}
  \end{equation}
  \begin{equation}\label{eq_maxlike_b}
    \begin{aligned}
      \text{with }
      \hat{P}_{i}^{t} &=\hat{V}_{i}^{t} \sum_{j=1}^{N} \hat{V}_{j}^{t}\left(G_{i j} \cos \hat{\theta}_{ij}^{t}+B_{ij} \sin \hat{\theta}_{ij}^{t}\right), \\
      \hat{Q}_{i}^{t} &=\hat{V}_{i}^{t} \sum_{j=1}^{N} \hat{V}_{j}^{t}\left(G_{ij} \sin \hat{\theta}_{ij}^{t}-B_{ij} \cos \hat{\theta}_{ij}^{t}\right),
      \end{aligned}
  \end{equation}
\end{subequations}
where $G_{ij}/B_{ij}$ denote the ($i$,$j$)th element of conductance/susceptance matrix, $P_i^t/Q_i^t$ denote the active/reactive power injection measurements of bus $i$ snapshot $t$, $V_i^t/\theta_i^t$ denote the voltage magnitude/angle measurements of bus $i$ snapshot $t$, $\hat{P}_i^t/\hat{Q}_i^t/V_i^t/\theta_i^t$ denote the evaluated value of $P_i^t/Q_i^t/V_i^t/\theta_i^t$, $\sigma_{P_i}/\sigma_{Q_i}/\sigma_{V_i}/\sigma_{\theta_i}$ denote standard deviations of the corresponding measurements, and $\mathcal{M}_P/\mathcal{M}_Q/\mathcal{M}_V/\mathcal{M}_\theta$ denote the bus sets that the corresponding $P/Q/V/\theta$ measurements are available, respectively.
In the above problem~\eqref{eq_maxlike}, the estimation of $G_{ij}$ and $B_{ij}$ is equivalent to the estimation of topology and admittance, where zero values denote the disconnection of two buses and non-zero values denote the admittance.
We also recover all the state variables $V_i^t$ and $\theta_i^t$ as a by-product. 
The objective function~\eqref{eq_maxlike_a} minimizes the weighted squared loss of the data from the measurement sets $\mathcal{M}_P/\mathcal{M}_Q/\mathcal{M}_V/\mathcal{M}_\theta$.
The measurement sets can be any combinations of the buses in the distribution grid, from the universe to the empty set.
The standard deviations $\sigma_{P_i}/\sigma_{Q_i}/\sigma_{V_i}/\sigma_{\theta_i}$ reflect the precisions of different measurements.
The estimated values $\hat{P}_i^t/\hat{Q}_i^t/V_i^t/\theta_i^t$ satisfy the AC power flow equations in~\eqref{eq_maxlike_b}.
By neglecting the shunt admittance in the distribution grid~\cite{yu2017patopa,moffat2019unsupervised}, $G_{ij}$ and $B_{ij}$ become:
\begin{subequations}\label{eq_bus_line}
  \begin{equation}\label{eq_bus_line_a}
    G_{ij}=G_{ji}=-g_{ij},\text{ }G_{ii}=\sum_{j}g_{ij},
  \end{equation}
  \begin{equation}\label{eq_bus_line_b}
    B_{ij}=B_{ji}=-b_{ij},\text{ }B_{ii}=\sum_{j}b_{ij},
  \end{equation}
\end{subequations}
where $g_{ij}/b_{ij}$ denote the conductance/susceptance of line~($i$, $j$).

We then provide a general formulation of the estimation model.
We formulate all the measurements as an $M\times 1$ vector $\bm{z}$:
\begin{equation}\label{eq_measure_vec}
  \bm{z}\!=\!\!\left[\!\left\{
    \left\{\!P_i^t\right\}_{i\in\mathcal{M}_P}\!
    \left\{\!Q_i^t\right\}_{i\in\mathcal{M}_Q}\!
    \left\{\!V_i^t\right\}_{i\in\mathcal{M}_V}\!
    \left\{\!\theta_i^t\right\}_{i\in\mathcal{M}_\theta}
    \right\}_{t=1\sim T}\!\right]^T\!\!.
\end{equation}
Similarly, we define vector $\bm{\sigma}$ with each element corresponding to the standard deviation of measurement $\bm{z}$.
The estimated variables contain not only the voltage states but also the network admittances, with the $S\times 1$ state vector $\bm{x}$ formulated as:
\begin{equation}\label{eq_state_vec}
  \bm{x}\!=\!\!\left[\!
    \left\{g_{ij}\right\}_{(i,j)\in\mathcal{S}_T}\!\!
    \left\{b_{ij}\right\}_{(i,j)\in\mathcal{S}_T}\!\!
    \left\{\!
    \left\{\!\hat{V}_i^t\right\}_{i\in\mathcal{S}_B}\!\!\!
    \left\{\!\hat{\theta}_i^t\right\}_{i\in\mathcal{S}_B}\!\!
    \right\}_{t=1\sim T}\!\right]^T\!\!\!\!,
\end{equation}
where $\mathcal{S}_T$ denotes the set of possible connected buses, and $\mathcal{S}_B$ denotes the set of all the buses.
If no prior topology information is available, $\mathcal{S}_T$ contains all of the possible branches between any two buses.
From~\eqref{eq_maxlike}-\eqref{eq_state_vec}, we get the measurement model of the TAE problem:
\begin{equation}\label{eq_measure_model}
  \bm{z}=\bm{h}(\bm{x})+\bm{\epsilon},
\end{equation}
where $\bm{\epsilon}$ denotes the $M\times 1$ vector of measurement error.

\subsection{Geometric Illustration}
As shown in Section~\ref{sec_problem_formulation}, the estimation problem is an unconstrained nonlinear optimization problem, which belongs to the same category as the power system SE problem.
However, the TAE problem is much more difficult to solve such that the traditional methods used for SE problems have very poor performance on TAE problems.

\begin{figure}[ht!]
	\centering
  \begin{minipage}[ht]{3.3cm}
		\centering
		\includegraphics[width=0.8\linewidth]{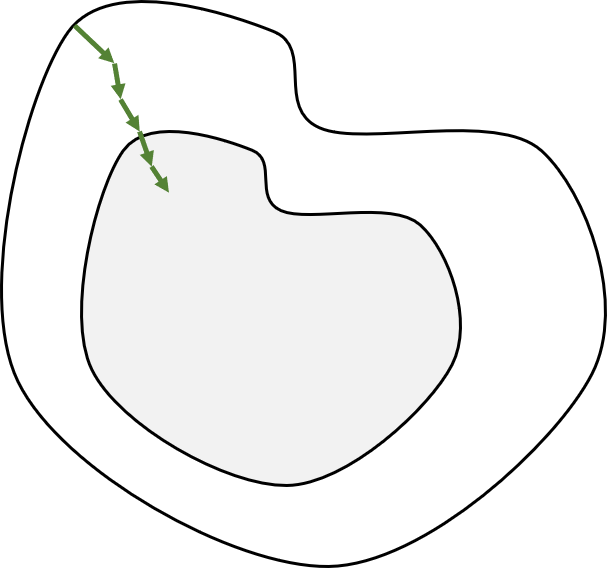}
		\centerline{\footnotesize{(a)}}
		\label{fig_opt_a}
	\end{minipage}
	\begin{minipage}[ht]{5cm}
		\centering
		\includegraphics[width=0.8\linewidth]{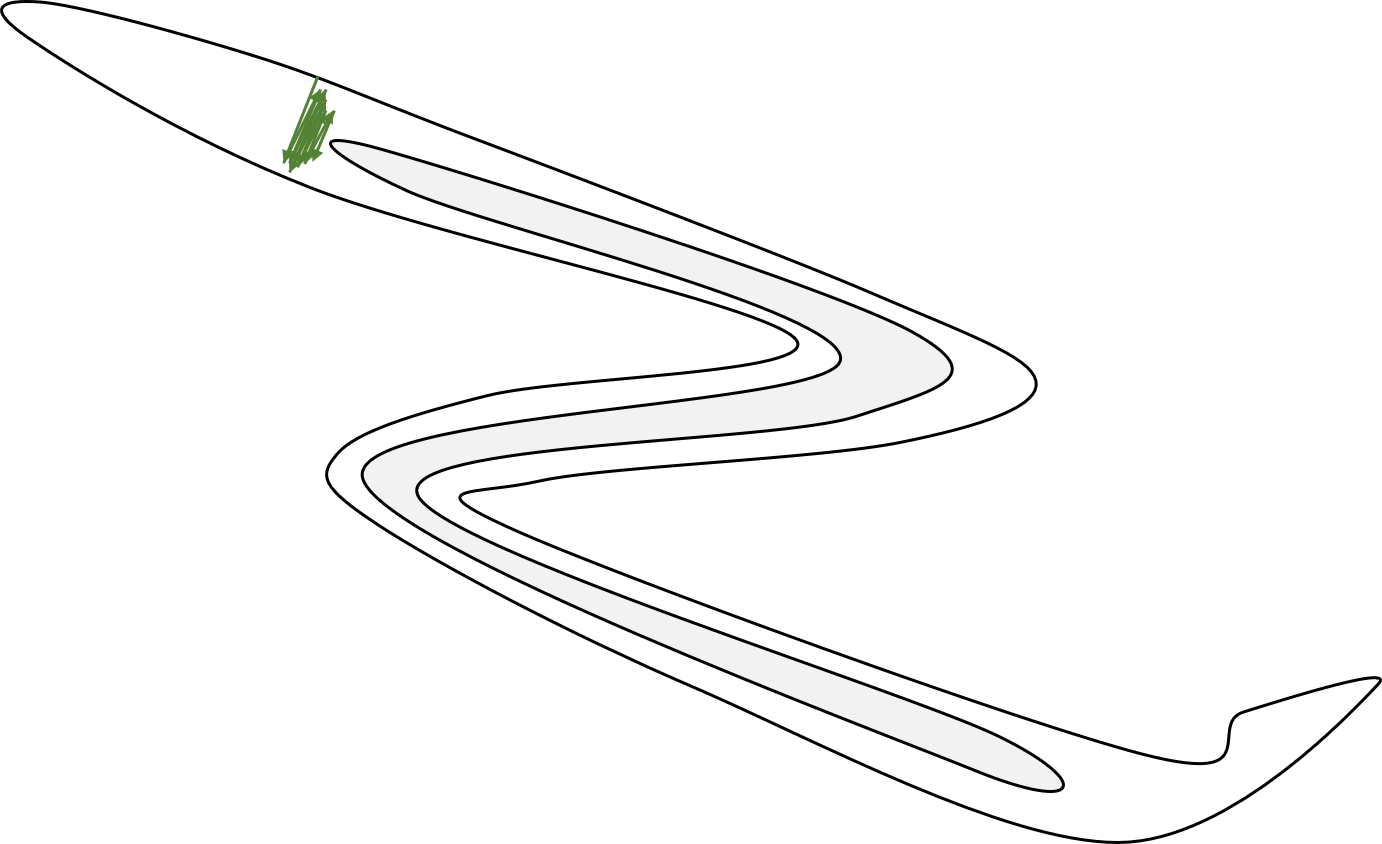}
		\centerline{\footnotesize{(b)}}
		\label{fig_opt_b}
  \end{minipage}
  \begin{minipage}[ht]{3.3cm}
		\centering
		\includegraphics[width=0.8\linewidth]{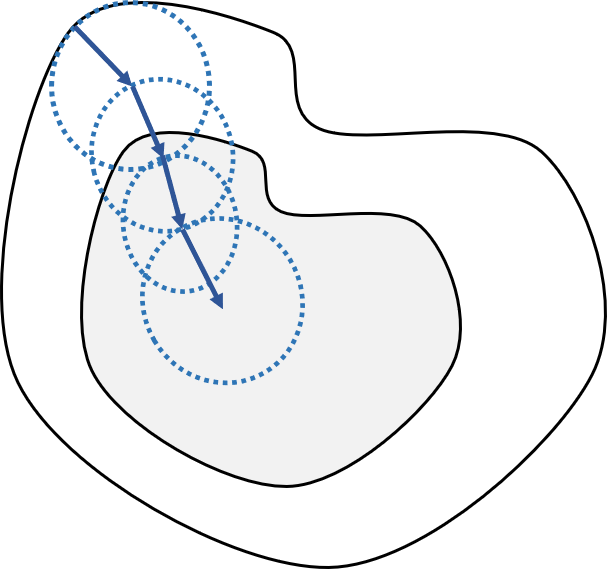}
		\centerline{\footnotesize{(c)}}
		\label{fig_opt_c}
	\end{minipage}
	\begin{minipage}[ht]{5cm}
		\centering
		\includegraphics[width=0.85\linewidth]{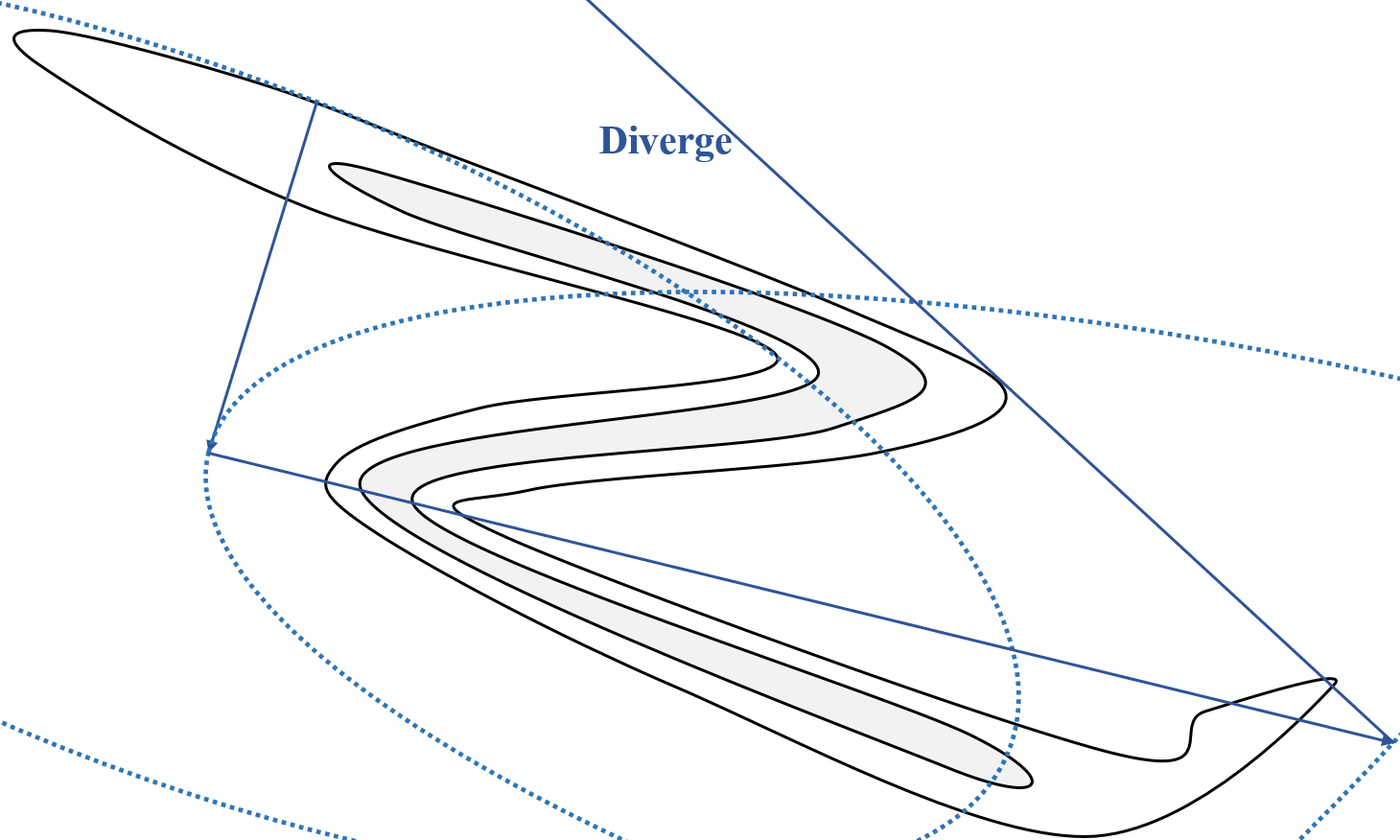}
		\centerline{\footnotesize{(d)}}
		\label{fig_opt_d}
	\end{minipage}
	\vspace{-.3cm}
  \caption{
  Geometric illustration of SE problem and TAE. 
  The 2-D space denotes the parameter space.
  The black lines are contour lines of the loss function value.
  The blue dashed lines represent the local second-order Taylor approximations.
  (a) First-order optimization for SE. 
  (b) First-order optimization for TAE. 
  (c) Second-order optimization for SE. 
  (d) Second-order optimization for TAE.
  }
	\label{fig_opt}
\end{figure}

In~\figurename~\ref{fig_opt}, we provide an illustrative example of why the TAE problem is much more difficult to solve than the SE problem:
1) the TAE problem is more non-convex because much more variables (both the model parameters and the state variables) are unknown and should be optimized.
The multiplications of more decision variables, as shown in~\eqref{eq_maxlike_b}, incur more severe non-convexity.
2) the TAE problem is more ill-conditioned than the SE problem.
The decision variables in the TAE problem have very different scales.
For example, the voltage magnitudes are often around 1 p.u. and the voltage angles are often below 1.0 rad.
However, the admittance in distribution grids can be very large (from $10^2$ to more than $10^4$), especially when the lines are short.
As a result, the parameter space can have very different scales in different directions, which makes the problem ill-conditioned.
Hence, we use a slightly non-convex contour map to represent the SE problem in~\figurename~\ref{fig_opt}~(a) and \figurename~\ref{fig_opt}~(c).
For comparison, we use a highly non-convex and ill-conditioned contour map to represent the TAE problem in~\figurename~\ref{fig_opt}~(b) and \figurename~\ref{fig_opt}~(d).

There are two types of methods to solve the unconstrained differentiable optimization problem: the first-order optimizations (the gradient-based methods) and the second-order optimizations (the Newton or quasi-Newton-based methods)~\cite{boyd2004convex}.
The first-order optimizations use the local gradient information and iteratively search the solution.
With proper step length, the first-order optimization is stable because the direction of the gradient can reduce the loss function.
Such strategy may work for SE problems as shown in \figurename{\ref{fig_opt}~(a)}, but may get trapped for the TAE problems as shown in \figurename{\ref{fig_opt}~(b)}.
On the other hand, the second-order optimizations search the solution iteratively by using the information from the second-order Taylor approximations.
This kind of method is widely adopted and proved effective in SE problems~\cite{abur2004power}.
It has a fast convergence speed for problem in~\figurename{\ref{fig_opt}~(c)}.
However, second-order optimizations can easily diverge as shown in~\figurename{\ref{fig_opt}~(d)}.






\subsection{Cramér-Rao Lower Bound}
Since the TAE problem is rather difficult, we first estimate the precision limit to better evaluate the estimation result.
We use the CRLB to estimate the precision limit of the TAE problem.
The reason for using CRLB is that it can be applied to lower bound the estimation variances of any unbiased estimator~\cite{kay1993fundamentals}, which is seldomly developed in power system analysis.
Wang \textsl{et al.} derived the CRLB for the power system SE problem~\cite{wang2018power}.
Grotas \textsl{et al.} derived the CRLB of TAE using a simplified linear DCPF model~\cite{grotas2019power}.
Damavandi \textsl{et al.}~\cite{damavandi2015robust} and Xygkis \textsl{et al.}~\cite{xygkis2016fisher} developed the Fisher Information based approach, which is closely related to the CRLB thoery, for the meter placement problem.
In this section, we derive the CRLB for the TAE problem, under the accurate non-linear AC power flow formulation. 
Compared with the existing problems~\cite{wang2018power,grotas2019power,damavandi2015robust,xygkis2016fisher}, it is a large-scale multiple snapshots non-linear estimation problem with unknown models and state variables.

\begin{theorem}[Cramér-Rao lower bound for topology and admittance estimation]
  \label{theo_CRLB}
  For the estimation problem~\eqref{eq_measure_model} with $\bm{\epsilon}\sim \mathcal{N}(\bm{0},\bm{\sigma})$, the covariance matrix $\bm{C}$ of any unbiased estimator of $\bm{x}$ satisfies:
  \begin{subequations}\label{eq_CRLB}
    \begin{equation}\label{eq_CRLB_a}
      \bm{C}-\bm{F}^{-1}\succeq \bm{0},
    \end{equation}
    \begin{equation}\label{eq_CRLB_b}
      \text{with }\bm{F}=\bm{H}^T diag(\bm{\sigma}^{-2}) \bm{H},
      \text{ }
      \bm{H}=\frac{\partial\bm{h}(\bm{x})}
      {\partial\bm{x}^T},
    \end{equation}
  \end{subequations}
  where $\bm{F}$ denotes the $S\times S$ symmetric Fisher-information matrix~\cite{kay1993fundamentals}, and $\bm{\sigma}$ denotes the $M\times 1$ vector of measurement noise standard deviations, respectively.
\end{theorem}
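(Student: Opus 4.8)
The plan is to specialize the classical Cram\'er--Rao bound~\cite{kay1993fundamentals} to the Gaussian measurement model~\eqref{eq_measure_model}, so the argument splits cleanly into computing the Fisher information explicitly and then invoking the generic matrix inequality. First I would write down the likelihood. Because $\bm{\epsilon}\sim\mathcal{N}(\bm{0},\bm{\sigma})$ with independent components whose standard deviations are collected in $\bm{\sigma}$, the log-likelihood of $\bm{z}$ given $\bm{x}$ is, up to an additive constant,
\begin{equation}
\ln p(\bm{z};\bm{x})=\text{const}-\tfrac{1}{2}(\bm{z}-\bm{h}(\bm{x}))^T diag(\bm{\sigma}^{-2})(\bm{z}-\bm{h}(\bm{x})),
\end{equation}
which is exactly minus one half of the weighted least-squares objective~\eqref{eq_maxlike_a} once $\bm{h}$ is assembled from the power-flow map~\eqref{eq_maxlike_b}.

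Second, I would differentiate to obtain the score. With $\bm{H}=\partial\bm{h}(\bm{x})/\partial\bm{x}^T$ as in~\eqref{eq_CRLB_b}, the chain rule gives the score $\bm{s}(\bm{x})=\partial\ln p/\partial\bm{x}=\bm{H}^T diag(\bm{\sigma}^{-2})(\bm{z}-\bm{h}(\bm{x}))$. The Fisher information is then the covariance of the score, $\bm{F}=E[\bm{s}\bm{s}^T]$. Substituting and using $E[(\bm{z}-\bm{h}(\bm{x}))(\bm{z}-\bm{h}(\bm{x}))^T]=diag(\bm{\sigma}^2)$ collapses the two $diag(\bm{\sigma}^{-2})$ weights against $diag(\bm{\sigma}^2)$ and yields the stated $\bm{F}=\bm{H}^T diag(\bm{\sigma}^{-2})\bm{H}$. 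Equivalently one verifies $E[\bm{s}]=\bm{0}$ and $\bm{F}=-E[\partial^2\ln p/\partial\bm{x}\partial\bm{x}^T]$, both of which require the usual regularity conditions so that differentiation passes under the integral sign.

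Third, I would invoke the generic bound. For any unbiased $\hat{\bm{x}}$ the stacked vector $(\hat{\bm{x}}-\bm{x};\bm{s})$ has a positive-semidefinite covariance matrix whose blocks are $\bm{C}$, the identity, and $\bm{F}$; the cross term $E[(\hat{\bm{x}}-\bm{x})\bm{s}^T]=\bm{I}$ is forced by differentiating the unbiasedness identity $E[\hat{\bm{x}}]=\bm{x}$ and using $E[\bm{s}]=\bm{0}$. Taking the Schur complement of the $\bm{F}$ block of this positive-semidefinite matrix then delivers~\eqref{eq_CRLB_a}, namely $\bm{C}-\bm{F}^{-1}\succeq\bm{0}$.

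The routine algebra here is light; the genuinely delicate points are the regularity assumptions and, more importantly for this problem, the invertibility of $\bm{F}$. When measurements are scarce or the topology is unidentifiable (e.g. the zero-injection degeneracy noted in~\cite{moffat2019unsupervised}), $\bm{H}$ loses column rank and $\bm{F}$ becomes singular, so $\bm{F}^{-1}$ must be interpreted on the identifiable subspace as a pseudo-inverse on the range of $\bm{F}$. I expect the main effort beyond the theorem itself to lie in assembling $\bm{H}$ from the AC power-flow equations~\eqref{eq_maxlike_b}: the multi-snapshot stacking in~\eqref{eq_state_vec} together with the bilinear voltage-magnitude and trigonometric angle dependence makes the partial derivatives block-structured, and it is precisely here that the large-scale, ill-conditioned character of the problem enters.
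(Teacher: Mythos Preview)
Your proposal is correct and follows essentially the same route as the paper: write the Gaussian log-likelihood for~\eqref{eq_measure_model}, compute the Fisher information, and invoke the classical Cram\'er--Rao bound from~\cite{kay1993fundamentals}. The only cosmetic difference is that the paper obtains $\bm{F}$ via the expected Hessian $E[\partial^2 p/\partial\bm{x}^T\partial\bm{x}]$ and then simply cites~\cite{kay1993fundamentals} for~\eqref{eq_CRLB_a}, whereas you compute $\bm{F}=E[\bm{s}\bm{s}^T]$ from the score and supply the Schur-complement argument for the matrix inequality explicitly; your discussion of regularity and of the singular-$\bm{F}$ case goes beyond what the paper records.
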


See Appendix \ref{appen_proof_CRLB} for the proof.
Since $\bm{C}-\bm{F}^{-1}$ is semi-definite, the variance of $\bm{x}$, which reflects the estimation precision, is lower bounded by the diagonal elements of~$\bm{F}^{-1}$:
\begin{equation}\label{eq_sigma_bound}
  \bm{\sigma_x}\geq \bm{\sigma_x^{cr}}=diag(\bm{F}^{-1}).
\end{equation}
It can be intuitively interpreted that the more information the measurement $\bm{z}$ carries about the estimated state $\bm{x}$, the lower variance the estimators can obtain.

\subsection{Fisher-information Matrix Partition}
The challenge of using CRLB for the TAE problem is that the dimension of the Fisher-information matrix is very large. 
For a N-bus distribution grid system, $\mathcal{M}_T$ contains at most $C_N^2=N(N-1)/2$ elements, $\mathcal{M}_V$ and $\mathcal{M}_\theta$ contain at most $N$ elements.
In this case, the number of state variables is $S=N(N-1)+2NT$.
Simply estimating the 33-bus distribution system using 200 snapshots of measurement data, we have to calculate the Fisher-information matrix and its inverse with $14256\times 14256$ dimension.

To address the heavy memory consumption challenge, we partition the Fisher-information matrix and calculate its inverse in a memory saving manner:
\begin{equation}\label{eq_part_matrix}
  \bm{F}=
  \left[
    \begin{matrix}
      \bm{F_{aa}}&\bm{F_{av}}\\
      \bm{F_{av}}^T&\bm{F_{vv}}
    \end{matrix}
  \right]
  =
  \left[
    \begin{matrix}
      \bm{F_{aa}}&\bm{F_{a1}}&\bm{F_{a2}}&...&\bm{F_{aT}}\\
      \bm{F_{a1}}^T&\bm{F_{11}}& & & \\
      \bm{F_{a2}}^T& &\bm{F_{22}}& & \\
      ...& & &...& \\
      \bm{F_{aT}}^T& & & &\bm{F_{TT}}\\
    \end{matrix}
  \right],
\end{equation}
where the subscript $\bm{a}$ corresponds to the admittance $\left[\left\{g_{ij}\right\}\left\{b_{ij}\right\}\right]$ in~\eqref{eq_state_vec}, the subscript $\bm{v}$ corresponds to the voltage magnitudes and angles from all the snapshots 
$[\{\{\hat{V}_i^t\}
\{\hat{\theta}_i^t\}
\}_{t=1\sim T}]$
in~\eqref{eq_state_vec}, and the subscript $\bm{T}$ corresponds to the voltage magnitudes and angles from snapshot $T$
$[\{\hat{V}_i^T\}
\{\hat{\theta}_i^T\}]$ in~\eqref{eq_state_vec}, respectively.
Note that we formulate~\eqref{eq_part_matrix} in a symmetric way.
From~\eqref{eq_part_matrix}, we can observe that the Fisher-information matrix is a sparse matrix because the voltage states in different snapshots are decoupled.
In other words, the voltage measurements in one snapshot do not contain any information about the voltage in any other snapshots.
We can take advantage of this feature and calculate the CRLB of $\left[\left\{g_{ij}\right\}\left\{b_{ij}\right\}\right]$ in a memory saving manner.
\begin{theorem}[Cramér-Rao lower bound with matrix partition]
  \label{theo_CRLB_part}
The covariance matrix $\bm{C_a}$ of any unbiased estimator of $\bm{x_a}=\left[\{g_{ij}\}_{(i,j)\in\mathcal{M}_T}\{b_{ij}\}_{(i,j)\in\mathcal{M}_T}\right]$ satisfies:
\begin{subequations}\label{eq_CRLB_part}
  \begin{equation}\label{eq_CRLB_part_a}
    \bm{C_a}-\bm{F_a}^{-1}\succeq \bm{0},
  \end{equation}
  \begin{equation}\label{eq_CRLB_part_b}
    \text{with }\bm{F_a}=\bm{F_{aa}}-\sum_{t=1}^{T}\bm{F_{at}}\bm{F_{tt}}^{-1}\bm{F_{at}}^T.
  \end{equation}
\end{subequations}
\end{theorem}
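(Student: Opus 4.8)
The plan is to recognize that $\bm{F_a}$ in~\eqref{eq_CRLB_part_b} is precisely the Schur complement of the block $\bm{F_{vv}}$ in the partitioned Fisher-information matrix~\eqref{eq_part_matrix}, and then to reduce the sub-vector bound~\eqref{eq_CRLB_part_a} to the full bound of Theorem~\ref{theo_CRLB} by a principal-submatrix (congruence) argument. First I would invoke the block-matrix inversion formula: writing $\bm{F}$ in the $2\times 2$ block form with blocks $\bm{F_{aa}}$, $\bm{F_{av}}$, and $\bm{F_{vv}}$, the top-left block of $\bm{F}^{-1}$ equals $(\bm{F_{aa}}-\bm{F_{av}}\bm{F_{vv}}^{-1}\bm{F_{av}}^T)^{-1}$, provided $\bm{F_{vv}}$ is nonsingular. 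I would note that the positive definiteness (hence invertibility) of $\bm{F_{vv}}$ is inherited from that of $\bm{F}$ whenever the measurement model is identifiable.

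Second, I would exploit the block-diagonal structure $\bm{F_{vv}}=\mathrm{blockdiag}(\bm{F_{11}},\dots,\bm{F_{TT}})$ exhibited in~\eqref{eq_part_matrix}. Because the inverse of a block-diagonal matrix is the block-diagonal of the blockwise inverses, and $\bm{F_{av}}=[\bm{F_{a1}}\ \cdots\ \bm{F_{aT}}]$, the quadratic term collapses into the snapshot sum $\bm{F_{av}}\bm{F_{vv}}^{-1}\bm{F_{av}}^T=\sum_{t=1}^{T}\bm{F_{at}}\bm{F_{tt}}^{-1}\bm{F_{at}}^T$. Substituting this back identifies $\bm{F_a}$ of~\eqref{eq_CRLB_part_b} with the Schur complement and shows $(\bm{F}^{-1})_{aa}=\bm{F_a}^{-1}$; this is the step that converts a generic inverse into the memory-saving form the section advertises.

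Third, I would transfer the full-vector bound to the admittance sub-vector. By Theorem~\ref{theo_CRLB}, $\bm{C}-\bm{F}^{-1}\succeq\bm{0}$. Let $\bm{\Pi}=[\,\bm{I}\quad\bm{0}\,]$ be the selection matrix extracting the admittance coordinates, so that $\bm{C_a}=\bm{\Pi}\bm{C}\bm{\Pi}^T$ (the covariance of $\bm{\Pi}\hat{\bm{x}}$, an unbiased estimator of $\bm{x_a}=\bm{\Pi}\bm{x}$) and $\bm{F_a}^{-1}=\bm{\Pi}\bm{F}^{-1}\bm{\Pi}^T$ by the previous step. Since congruence preserves positive semidefiniteness, $\bm{\Pi}(\bm{C}-\bm{F}^{-1})\bm{\Pi}^T=\bm{C_a}-\bm{F_a}^{-1}\succeq\bm{0}$, which is~\eqref{eq_CRLB_part_a}.

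The computations are routine linear algebra; the one point that demands care is establishing the block-diagonal form of $\bm{F_{vv}}$ rigorously, since this structural fact is exactly what turns the generic Schur complement into the per-snapshot sum. I would verify it directly from $\bm{H}=\partial\bm{h}(\bm{x})/\partial\bm{x}^T$ in~\eqref{eq_CRLB_b}: each measurement of snapshot $t$ depends only on the voltage states $\{\hat{V}_i^t,\hat{\theta}_i^t\}$ of that same snapshot (and on the shared admittances), so its derivatives with respect to the voltage states of any other snapshot $t'\neq t$ vanish. Consequently the voltage--voltage cross blocks of $\bm{H}^T\mathrm{diag}(\bm{\sigma}^{-2})\bm{H}$ for distinct snapshots are zero, giving the claimed $\bm{F_{vv}}$ structure and closing the argument.
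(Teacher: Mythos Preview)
Your proposal is correct and follows essentially the same route as the paper: identify the top-left block of $\bm{F}^{-1}$ with the inverse Schur complement of $\bm{F_{vv}}$, then exploit the block-diagonal structure of $\bm{F_{vv}}$ to collapse $\bm{F_{av}}\bm{F_{vv}}^{-1}\bm{F_{av}}^T$ into the per-snapshot sum. Your version is in fact slightly more careful than the paper's, which asserts the sub-block bound directly from Theorem~\ref{theo_CRLB} without spelling out the congruence/principal-submatrix step and takes the block-diagonality of $\bm{F_{vv}}$ as given rather than deriving it from the Jacobian.
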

\begin{proof}
  The inverse of the Fisher-information matrix is:
  \begin{equation}
    \bm{F}^{-1}=
    \left[
      \begin{matrix}
        \bm{F_{aa}}&\bm{F_{av}}\\
        \bm{F_{av}}^T&\bm{F_{vv}}
      \end{matrix}
    \right]^{-1}
    =
    \left[
      \begin{matrix}
        \bm{I_{aa}}&\bm{I_{av}}\\
        \bm{I_{av}}^T&\bm{I_{vv}}
      \end{matrix}
    \right].
  \end{equation}
  From \textbf{Theorem~\ref{theo_CRLB}}, the CRLB of $\bm{x_a}$ is given by:
  \begin{equation}
    \bm{C_a}-\bm{F_a}^{-1}\succeq \bm{0}
    \text{, with }
    \bm{F_a}^{-1}=\bm{I_{aa}}.
  \end{equation}
  where $\bm{F_a}$ is the Schur complement of block~$\bm{F_{vv}}$:
  \begin{equation}
    \bm{F_a}=\bm{F_{aa}}-\bm{F_{av}}\bm{F_{vv}^{-1}}\bm{F_{av}^T}.
  \end{equation}
  Since $\bm{F_{vv}}$ is a block diagonal matrix, we can compute the inverse in a decoupled way and finish the proof:
  \begin{equation}
    \bm{F_{av}}\bm{F_{vv}^{-1}}\bm{F_{av}^T}=
    \sum_{t=1}^{T}\bm{F_{at}}\bm{F_{tt}}^{-1}\bm{F_{at}}^T.
  \end{equation}
\end{proof}
\begin{figure}[htb!]
	\centering
		\includegraphics[width=0.6\linewidth]{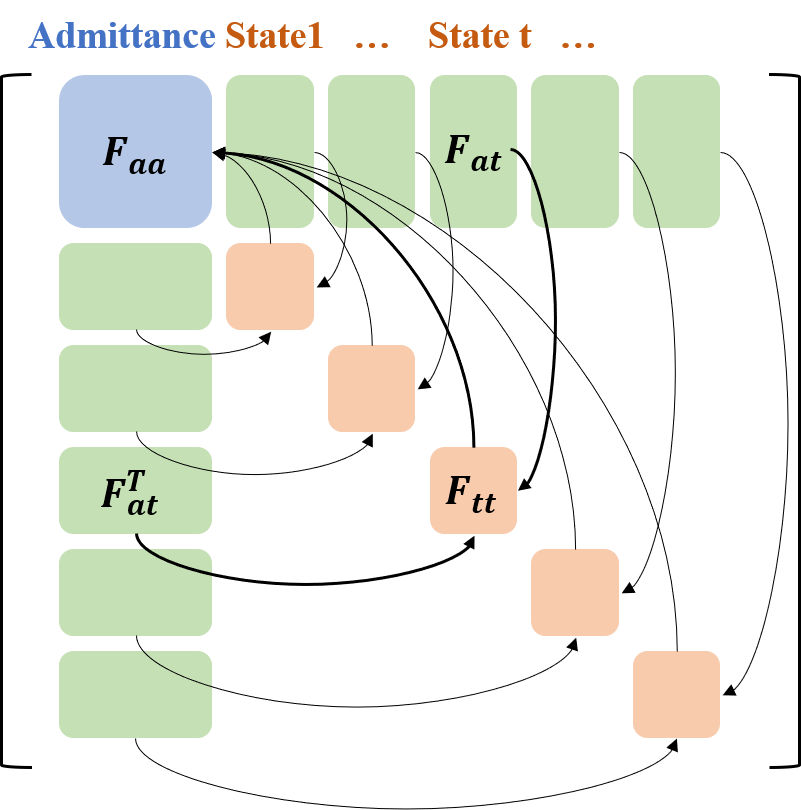}
	\caption{The structure of the partitioned Fisher-information matrix. The CRLB can be calculated in a memory saving manner.}
	\label{fig_partition}
\end{figure}

The structure of the Fisher-information matrix and way to calculate the inverse is shown in~\figurename~\ref{fig_partition}.
The required space to calculate the CRLB is largely reduced to $S'\times S'$, $S'=N(N-1)+2N$.
Since we decouple the states from different snapshots, the space complexity does not increase with the increase of snapshots.
We only need to compute the inverse of matrices with dimension of $N(N-1)\times N(N-1)$ ($\bm{F_{aa}}$) and $2N\times 2N$ ($\bm{F_{tt}}$).
Recall the 33-bus example, the dimension of the matrices are $1056\times 1056$ and $66\times 66$, much smaller than $14256\times 14256$.


\section{CPS Algorithm}
\label{sec_cps1}

\subsection{Framework of the Algorithm}

\begin{figure}[htb!]
	\centering
		\includegraphics[width=0.8\linewidth]{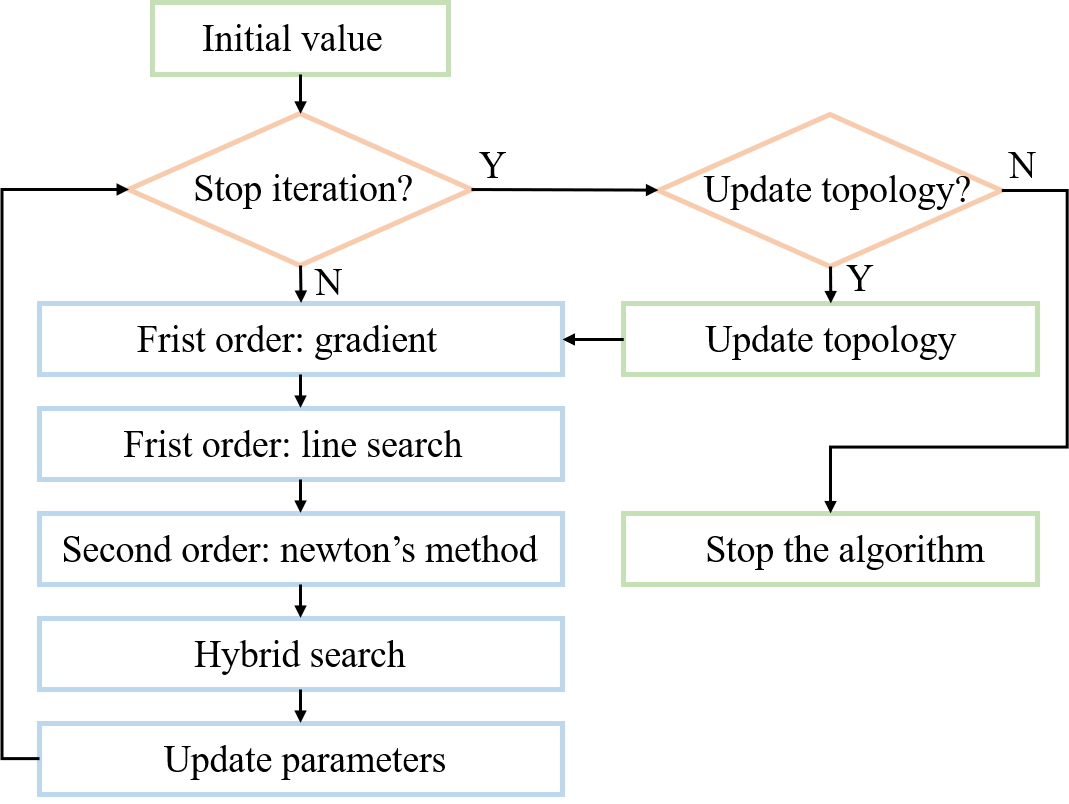}
	\caption{The framework of the CPS algorithm.}
	\label{fig_framework}
\end{figure}

The framework of the proposed CPS algorithm is shown in \figurename{\ref{fig_framework}}.
We first obtain an initial value of the admittance by evaluating a simplified model.
Then we propose an optimizer that combines the advantages of both first-order optimizations and second-order optimizations.
We take the advantage of the stability (or conservative) and the fast convergence (or progressive) from the two methods.
We also propose a hybrid search strategy to self-adaptively tune the weights of the first-order optimizer and the second-order optimizer.
Afterward, we update the topology each time after the iteration converges.
At last, we stop the algorithm when there is no need to update the topology.
The following sections will describe the details of each step in the framework.

\begin{figure}[htb!]
	\centering
		\includegraphics[width=1.0\linewidth]{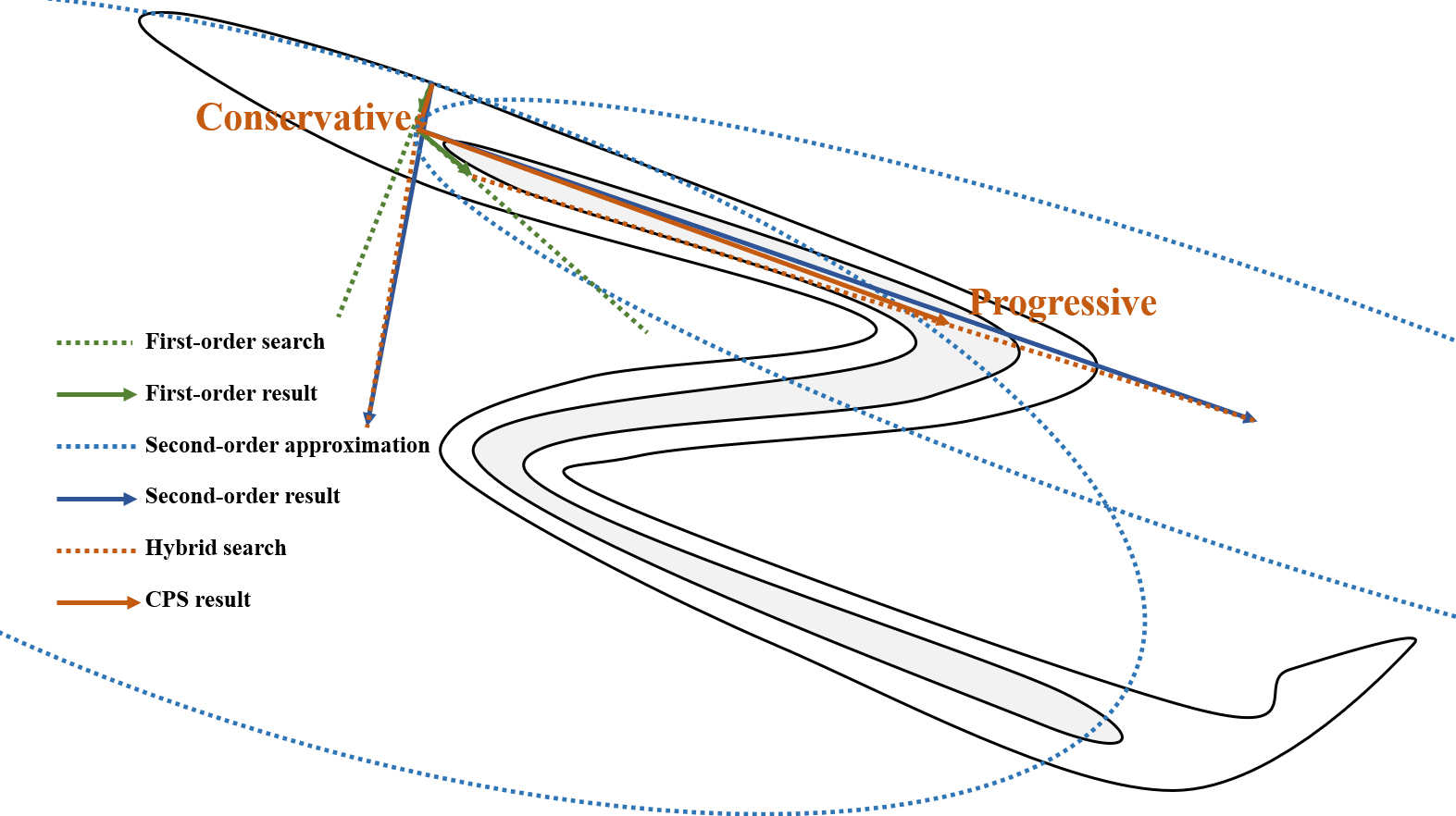}
	\caption{Geometric illustration of CPS method.}
	\label{fig_geo_CPS}
\end{figure}

In \figurename{\ref{fig_geo_CPS}}, we also provide a geometric illustration of why the CPS method can have good performance for the aforementioned highly non-convex and ill-conditioned problem.
The CPS method is conservative by taking the gradient descent direction when the second-order solution is not stable and is progressive when the second-order solution largely reduces the loss function.
In other words, the CPS method can automatically choose a proper ``mode'' between the first-order solution and the second-order solution.



\subsection{Initial Value}

The proposed CPS algorithm starts by obtaining an initial value of state vector $\bm{x}$ from the current measurements.
Note that the measurement availabilities can be a large variety and there are no general methods to obtain the initial value under all the circumstances.
Since the initial value do not require a mesh network setup or accurate parameters, we refer to~\cite{deka2020joint,park2020learning,miao2019distribution,moffat2019unsupervised,yu2017patopa,zhang2020topology} for initial value estimation under various conditions.
Even the initial values can be set as the values from the grid planning files~\cite{yu2017patopa}.
In this work, we extend the method in~\cite{zhang2020topology}, which is explained in \textbf{Appendix~\ref{appen_proof_pf_simple}}.

\subsection{First-order Optimization}
\label{sec_first-order}
The first-order optimization used in this work is modified from the adaptive moment estimation (Adam) optimizer~\cite{kingma2014adam}, which is widely applied and most favored in the training of deep learning networks.
The Adam optimizer combines the advantages of the momentum method and the root mean square propagation (RMSProp) method~\cite{kingma2014adam}.
On the one hand, Adam uses the moving average of the gradient to prevent oscillations during iterations.
On the other hand, Adam rescales the gradient by dividing the moving average of the squared gradients, so that the gradient of each variable is normalized to one or minus one.

On this basis, we improve the Adam in the way of rescaling the gradient.
Adam rescales every gradient equally to one or minus one.
However, the voltage magnitudes, angles, and admittances have very different scales and thus Adam can have a poor performance.
Instead, we first normalize the gradient and use the approximated CRLB to rescale the gradient.
The details are shown in \textbf{Algorithm~\ref{algorithm_first_order}}.

\begin{algorithm}[ht]
	\caption{The First-order Optimization}
	\label{algorithm_first_order}
	{\bf Inputs: }
	The last state vector $\bm{x^{k-1}}$ and the moment of the last iteration $\bm{m^{k-1}}$.
			
  \begin{algorithmic}[1]
  \IF{the iteration step $k=1$}
  \STATE Calculate the empirical CRLB using the initial value $\bm{x^0}$ and get $\bm{\sigma^{cr}_{x}}$. Initialize $\bm{m^0}=\bm{x^0}$. Get the mean value of $\bm{\sigma^{cr}_{x}}$ among admittances, voltage magnitudes, voltage angles, and get $[w^{cr}_a, w^{cr}_V, w^{cr}_\theta]$. 
  \STATE Set the moment ratio $\alpha$. 
  \ENDIF
	\STATE Calculate the gradient by $\bm{g^{k}}=\nabla_{\bm{x_{k-1}}} Loss$. Get the corresponding mean absolute value $[w^{g}_a, w^{g}_V, w^{g}_\theta]$.
	\STATE Normalize and rescale the gradient by $\bm{m^k}\leftarrow[\bm{g^k_a}/w^{g}_a\times w^{cr}_a, \bm{g^k_V}/w^{g}_V\times w^{cr}_V, \bm{g^k_\theta}/w^{g}_\theta\times w^{cr}_\theta]$.
	\STATE Update the moment $\bm{m^k}=\alpha\bm{m^{k-1}}+(\alpha-1)\bm{g^k}$.
	\end{algorithmic}
		
	{\bf Outputs: } 
	The moment $\bm{m^k}$ and the gradient $\bm{g^k}$.
\end{algorithm}

\subsection{Second-order Optimization}
\label{sec_second-order}
The second-order optimization used in this work is modified from the Newton's method~\cite{boyd2004convex,abur2004power}.
\begin{equation}
  \bm{x^k}=\bm{x^{k-1}}-\bm{g^k}(\bm{F^k})^{-1}.
\end{equation}
Different from the SE problem, the TAE problem is a multiple snapshots estimation problem.
The challenge for the Newton's method is the large memory consumption when calculating $\bm{g^k}(\bm{F^k})^{-1}$.
We can derive from \textbf{Theorem~\ref{theo_CRLB_part}} that the $\bm{g^k}(\bm{F^k})^{-1}$ can also be calculated in a memory saving manner.
\begin{proposition}[Second-order update with low memory consumption]
  For the Fisher-information matrix $\bm{F}$ in the form of~\eqref{eq_part_matrix} and the gradient vector in the form of
  \begin{equation}
    \bm{g}=
    \left[
    \begin{matrix}
      \bm{g_a}^T&\bm{g_v}^T
    \end{matrix}
    \right]^T
    =
    \left[
    \begin{matrix}
      \bm{g_a}^T&\bm{g_1}^T&\bm{g_2}^T&...&\bm{g_T}^T
    \end{matrix}
    \right]^T,
  \end{equation}
  $\bm{g}(\bm{F})^{-1}$ can be calculated in a memory saving manner:
  \begin{subequations}\label{eq_second_memory_saving}
    \begin{equation}
      \bm{d}=-\bm{g}(\bm{F})^{-1}=
      \left[
      \begin{matrix}
        \bm{d_a}^T&\bm{d_1}^T&\bm{d_2}^T&...&\bm{d_T}^T
      \end{matrix}
      \right]^T,
    \end{equation}
    \begin{equation}\label{eq_second_memory_saving_b}
      \text{with }
      \bm{d_a}
      =
      (\bm{F_{aa}}\!-\!\sum_{t=1}^{T}\!\bm{F_{at}}\bm{F_{tt}}^{-1}\bm{F_{at}}^T)^{-1}
      (\bm{g_a}\!-\!\sum_{t=1}^{T}\!\bm{F_{at}}\bm{F_{tt}}^{-1}\bm{g_t}),
    \end{equation}
    \begin{equation}\label{eq_second_memory_saving_c}
      \bm{d_t}=
      (\bm{F_{tt}}-\bm{F_{at}}^T\bm{F_{aa}}^{-1}\bm{F_{at}})^{-1}
      (\bm{g_t}-\bm{F_{at}}^T\bm{F_{aa}}^{-1}\bm{g_{a}}).
    \end{equation}
  \end{subequations}
  Note that for simplicity of notation, we drop the superscript of $k$ that denotes the $k$th iteration.
\end{proposition}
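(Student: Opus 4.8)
The plan is to read the claim as the statement that the Newton direction $\bm{d}=-\bm{g}(\bm{F})^{-1}$ is the solution of the symmetric linear system $\bm{F}\bm{d}=-\bm{g}$ (using $\bm{F}=\bm{F}^T$ to reconcile the row-vector product $\bm{g}\bm{F}^{-1}$ with the column solve $\bm{F}^{-1}\bm{g}$ up to transposition), and then to solve this system by block elimination exploiting exactly the arrowhead sparsity of \eqref{eq_part_matrix}. First I would partition the system into its natural block rows: one ``admittance'' row indexed by $\bm{a}$ and $T$ ``snapshot'' rows indexed by $t=1,\dots,T$. Writing out the block equations gives $\bm{F_{aa}}\bm{d_a}+\sum_{t=1}^{T}\bm{F_{at}}\bm{d_t}=-\bm{g_a}$ for the admittance row and $\bm{F_{at}}^T\bm{d_a}+\bm{F_{tt}}\bm{d_t}=-\bm{g_t}$ for each snapshot row, where the absence of any inter-snapshot blocks is precisely the block-diagonal structure of $\bm{F_{vv}}$ observed after \eqref{eq_part_matrix}.

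The key step is to eliminate the snapshot variables. From the $t$-th snapshot row I solve $\bm{d_t}=-\bm{F_{tt}}^{-1}(\bm{g_t}+\bm{F_{at}}^T\bm{d_a})$, which only requires inverting the small $2N\times 2N$ block $\bm{F_{tt}}$. Substituting these back into the admittance row collapses the $T$ couplings into two sums and yields the reduced system $(\bm{F_{aa}}-\sum_{t=1}^{T}\bm{F_{at}}\bm{F_{tt}}^{-1}\bm{F_{at}}^T)\bm{d_a}=-(\bm{g_a}-\sum_{t=1}^{T}\bm{F_{at}}\bm{F_{tt}}^{-1}\bm{g_t})$. The coefficient matrix here is exactly $\bm{F_a}$ from \textbf{Theorem~\ref{theo_CRLB_part}}, so inverting it establishes \eqref{eq_second_memory_saving_b} and, importantly, reuses the Schur-complement factorization already needed for the CRLB, keeping the working memory at $O(S'^2)$ with $S'=N(N-1)+2N$ (only the single $N(N-1)\times N(N-1)$ inverse of $\bm{F_a}$ and the $T$ independent $2N\times 2N$ inverses of $\bm{F_{tt}}$ are ever formed). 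Re-inserting $\bm{d_a}$ into the per-snapshot expression then recovers each $\bm{d_t}$, and the symmetric closed form \eqref{eq_second_memory_saving_c} is obtained by the dual elimination, i.e.\ by taking the Schur complement with respect to $\bm{F_{aa}}$ in the coupled $(\bm{a},t)$ subsystem.

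The step I expect to be the main obstacle is justifying the stated symmetric form \eqref{eq_second_memory_saving_c} rather than the raw back-substitution $\bm{d_t}=-\bm{F_{tt}}^{-1}(\bm{g_t}+\bm{F_{at}}^T\bm{d_a})$: a naive two-block Schur complement of the pair $(\bm{a},t)$ ignores that $\bm{d_a}$ also couples to the remaining $T-1$ snapshots, so I must check that the coupling is tracked consistently (and that $\bm{F_{aa}}$ is invertible for that elimination direction) before the compact expression is warranted. A secondary, purely clerical point of care is reconciling the row/column-vector conventions in $\bm{d}=-\bm{g}(\bm{F})^{-1}$ so that the transposes in \eqref{eq_second_memory_saving_b}--\eqref{eq_second_memory_saving_c} land correctly, which the symmetry $\bm{F}=\bm{F}^T$ renders routine. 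Everything else is standard block Gaussian elimination, so no deeper machinery should be required.
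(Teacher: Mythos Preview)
Your approach is correct and coincides with the paper's: the paper gives no separate proof of this proposition, merely stating that it ``can be derived from \textbf{Theorem~\ref{theo_CRLB_part}}'', i.e.\ by the same block-elimination/Schur-complement argument on the arrowhead structure of \eqref{eq_part_matrix} that you spell out. Your flagged obstacle regarding \eqref{eq_second_memory_saving_c} is well taken and is not resolved by the paper either---the two-block Schur form indeed drops the cross-snapshot coupling through $\bm{d_a}$, so the back-substitution $\bm{d_t}=-\bm{F_{tt}}^{-1}(\bm{g_t}+\bm{F_{at}}^T\bm{d_a})$ you derive is the version that is actually provable from the system $\bm{F}\bm{d}=-\bm{g}$.
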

Recall that the TAE problem is ill-conditioned caused by different scales of admittance, voltage magnitudes and angles.
The inversion of $(\bm{F_{aa}}\!-\!\sum_{t=1}^{T}\!\bm{F_{at}}\bm{F_{tt}}\bm{F_{at}}^T)$ may incur numerical instability in~\eqref{eq_second_memory_saving_b}.
Therefore we use the Moore–Penrose inverse~\cite{ben2003generalized} instead.

\subsection{Hybrid Line Search}
Neither the first-order optimization nor the second-order optimization can provide a stable and efficient search for the highly non-convex and ill-conditioned TAE problem.
We provide a hybrid line search strategy to combine the advantages of the two optimization methods.
This strategy is inspired by the Armijo–Goldstein condition~\cite{armijo1966minimization}, a.k.a., the backtracking line search.
We modify the backtracking line search by adding the hybrid search strategy of first-order and second-order directions.

\begin{figure}[htb!]
	\centering
		\includegraphics[width=0.8\linewidth]{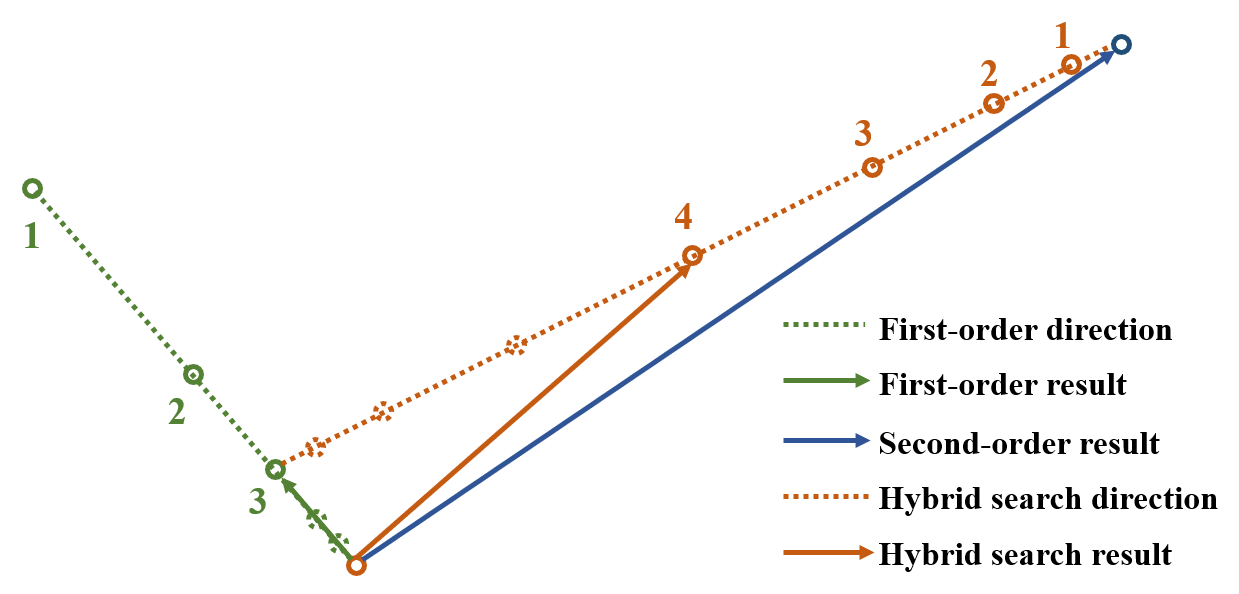}
	\caption{Geometric illustration of hybrid search.}
	\label{fig_hybrid_search}
\end{figure}

In detail, we first search for a step along the first-order direction that satisfies the Armijo–Goldstein condition.
That is, we start with a large enough step and then iteratively decrease the step size until the decrease of the loss function is large enough corresponding to the step size (or satisfies the Armijo–Goldstein condition).
It can guarantee a stable searching solution because the gradient direction will decrease the loss function with a sufficiently small step size.
We then search the step along the line from the second-order result to the first-order result.
Similarly, we start with assigning a larger weight to the second-order result and then iteratively decrease the weight of the second-order result and increase the first-order result, until the Armijo–Goldstein condition is satisfied.
In this way, we maintain a large enough proportion of second-order optimizer as long as it is stable.
The geometric illustration is shown in~\figurename{\ref{fig_hybrid_search}}.
The details of the hybrid line search are shown in \textbf{Algorithm~\ref{algorithm_hybrid_search}}.

\begin{algorithm}[ht]
	\caption{The Hybrid Line Search}
	\label{algorithm_hybrid_search}
	{\bf Inputs: }
	The last state vector $\bm{x^{k-1}}$.
			
  \begin{algorithmic}[1]
  \IF{the iteration step $k=1$}
  \STATE Set the start ratio $r_0$, the maximum ratio $r_{max}$, the incremental ratio $\beta$, and the stop threshold $\eta$.
  \ENDIF

  \STATE Do \textbf{Algorithm~\ref{algorithm_first_order}}. 
  \STATE Get $\bm{m^k}$ and $\bm{g^k}$. 
  \STATE $r\leftarrow r_0$.
  \WHILE{$r\leq r_{max}$}
  \STATE $\bm{x^k} \leftarrow \bm{x^{k-1}}+\bm{m^k}/\beta^r$. 
  \STATE $r\leftarrow r+1$.
  \IF{$Loss(\bm{x^{k-1}})-Loss(\bm{x^{k}})\leq \eta(\bm{g^k})^T\bm{m^k}/\beta^r$}
  \STATE $\bm{m^k}\leftarrow \bm{m^k}/\beta^r$. 
  \STATE \textbf{Break.}
  \ENDIF
  \ENDWHILE

  \STATE Calculate $\bm{d^k}$ from~\eqref{eq_second_memory_saving}. 
  \STATE $r\leftarrow r_0$.
	\WHILE{$r\leq r_{max}$}
  \STATE $w_1=1/(1+\beta^r)$. $w_2=\beta^r/(1+\beta^r)$.
  \STATE $\bm{x^k}\leftarrow \bm{x^{k-1}}+w_1\bm{m^k}+w_2\bm{d^k}$.
  \IF{$Loss(\bm{x^{k\!-1}})-Loss(\bm{x^{k}})\leq \eta(\bm{g^k})^T(w_1\bm{m^k}+w_2\bm{d^k})$}
  \STATE $\bm{x^{k-1}}\leftarrow \bm{x^k}$. 
  \STATE $\bm{m^k}\leftarrow \bm{m^k}/\beta^r$. 
  \STATE \textbf{Break.}
  \ENDIF
  \STATE $r\leftarrow r+1$.
  \ENDWHILE
	\end{algorithmic}
		
	{\bf Outputs: } 
	The vector of this state $\bm{x^k}$.
\end{algorithm}

\subsection{Update the Topology}

We estimate the admittances of all possible bus pairs.
Some pairs are disconnected and have zero admittances.
Therefore, we should update the topology and set the branches with small admittances to zero.
We use the method in Section III.B of~\cite{zhang2020topology} to set small admittances to zero.
Since TAE is a non-convex optimization, some non-zero admittances may approach zero temporally during the iterations.
To this end, we only update the topology when the iteration converges.
The converge criterion is:
\begin{equation}
  \max(|\bm{x^{k}}-\bm{x^{k-1}}|)\leq \gamma,
\end{equation}
where $\gamma$ is the iteration termination threshold.

It is common that some prior topology information of the distribution grids is available~\cite{li2020learning,zhang2020topology}, i.e., some bus pairs must be disconnected, so that we can set the corresponding admittances to zero during the iterations.
\section{Case Study}
\label{sec_case_study}
The power load data are from the Commission for Energy Regulation in Ireland~\cite{cer2012cer}.
We simulate the power system operational data with the aid of MATPOWER~7.0~\cite{zimmerman2010matpower}.
The simulation strategy is the same with~\cite{liu2018data}.
Additive white Gaussian noise is then added to the data.
In all our case studies, we set our hyper-parameters as: $\alpha=0.9$, $r_0=-5$, $r_{max}=20$, $\beta=5$, $\eta=0.01$, and $\gamma=10^{-5}$, respectively.

\subsection{Convergence of the Optimizer}
\begin{figure}[htb!]
	\centering
		\includegraphics[width=.6\linewidth]{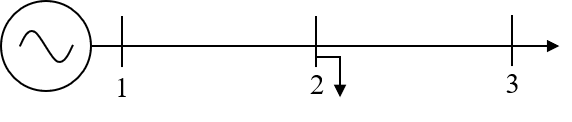}
	\caption{The 3-bus system.}
	\label{fig_case3_syst}
\end{figure}

We first use a simple 3-bus system in~\figurename{\ref{fig_case3_syst}} to compare the convergence of the proposed method with other methods.
We assume the topology of the 3-bus system is known and only estimates the admittance.
We estimate the admittance with 50 snapshots of data, with $P$, $Q$, $V$ measurements under 0.1\% noise.

\begin{figure}[htb!]
	\centering
		\includegraphics[width=1\linewidth]{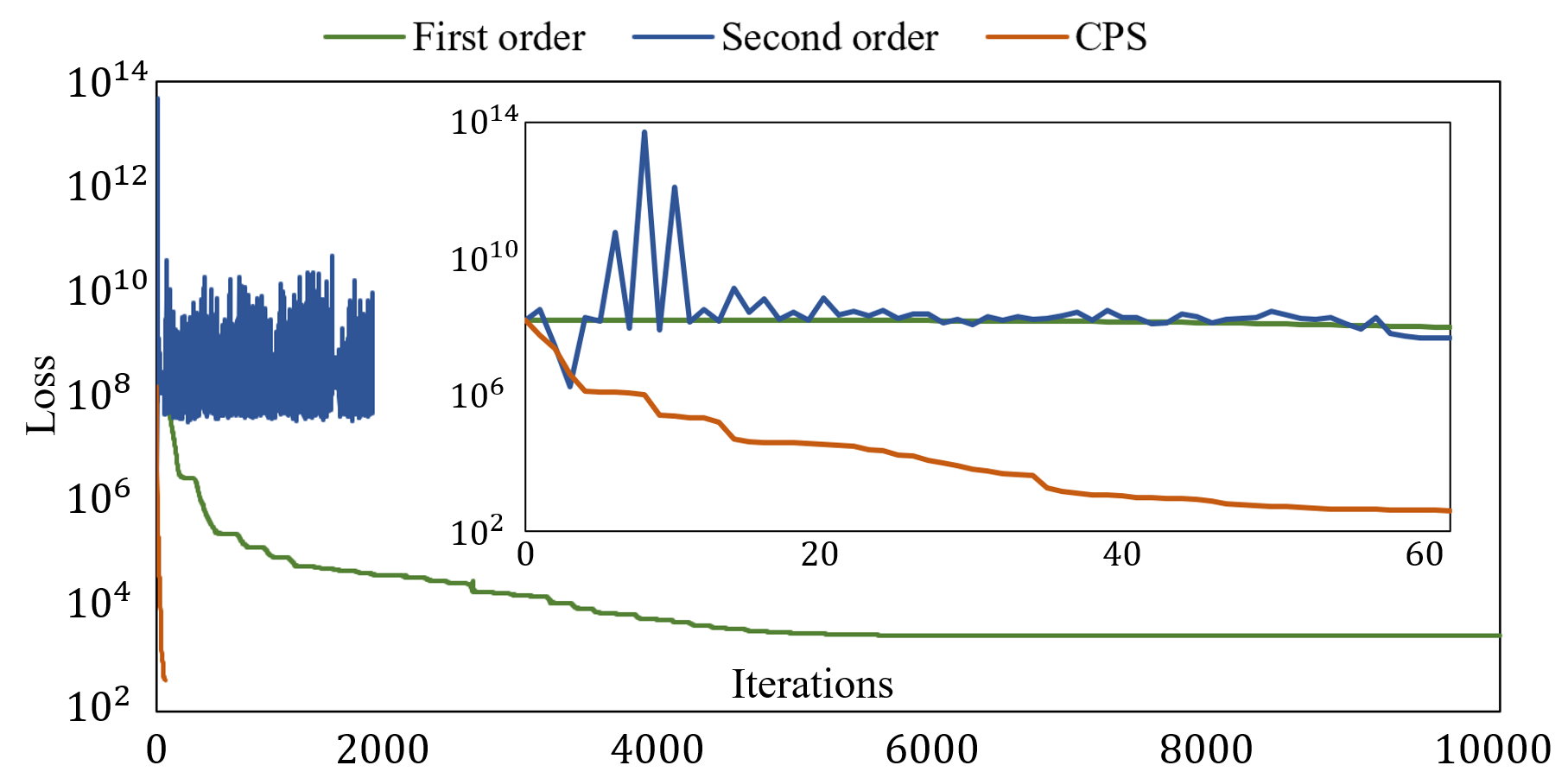}
	\caption{The loss of 3-bus system during the iterations.}
	\label{fig_case3_loss}
\end{figure}

We compare the losses during iterations with three methods: the first-order optimization in Section~\ref{sec_first-order}, the second-order optimization in Section~\ref{sec_second-order}, and the proposed CPS algorithm.
As shown in \figurename{\ref{fig_case3_loss}}, both the first-order and the second-order optimization have very poor performances even with a very simple 3-bus system.
The first-order optimization has a very slow converge speed.
The loss is over $10^3$ and fails to decrease after 10000 iterations.
The second-order optimization is not stable and the loss values oscillate in the range of $10^8\sim 10^{10}$.
However, the proposed CPS method converges in only 63 iterations with loss approaches to $3\times 10^2$.
The performance of the compared three methods can be well explained by the illustrations in \figurename{~\ref{fig_opt}} and \figurename{\ref{fig_geo_CPS}}.
The proposed CPS method shows extraordinary performance in the TAE problem.

\subsection{The Bound Attainability}


\begin{figure*}[htb!]
	\centering
    \includegraphics[width=1\linewidth]{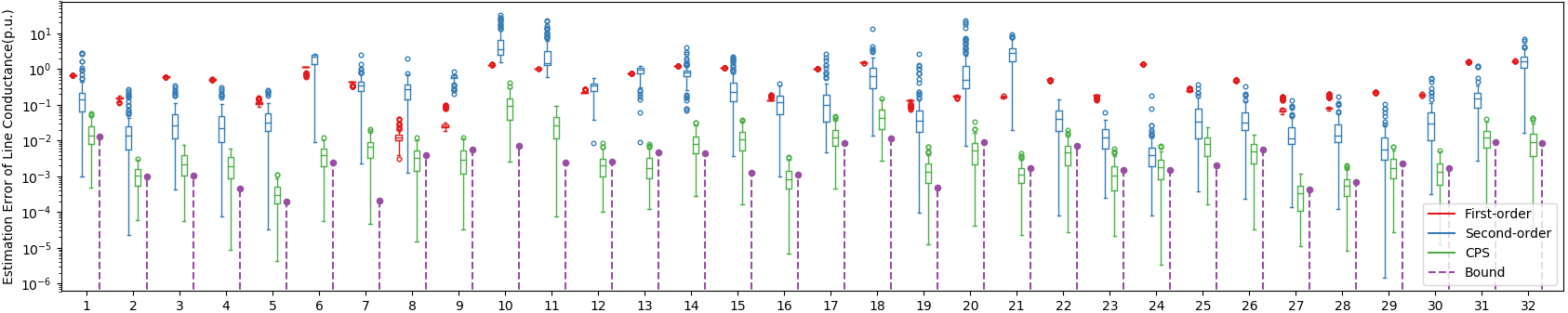}
    \centerline{\footnotesize{(a)}}\\[4pt]
    \includegraphics[width=1\linewidth]{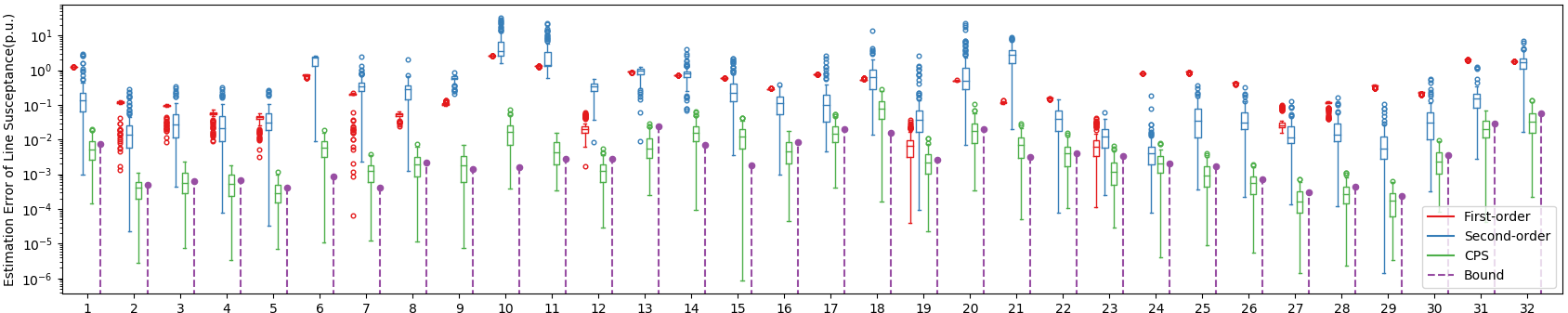}
    \centerline{\footnotesize{(b)}}
	\caption{The histogram of the admittance estimation error. (a) The error of line conductance. (b) The error of line susceptance.}
	\label{fig_case33_hist}
\end{figure*}

We then use the 12.66 kV 33-bus system~\cite{baran1989network} to demonstrate how well can the CPS method approaches the theoretical precision limits in Section~\ref{sec_crlb}.
We assume we have the $P$, $Q$, $V$, $\theta$ measurement from all the buses, with 0.1\% measurement error.
The snapshot of the measurement is 120.
We run the CPS algorithm 100 times under different randomly generated measurement noise and obtain the mean absolute error.
Under this setting, we can accurately estimate all the topologies.
The histogram of the admittance estimation error is shown in \figurename{\ref{fig_case33_hist}}.
We compare the estimation error of the first-order optimization, the second-order estimation, and the proposed CPS method.
We also demonstrate the theoretical precision limit evaluated by the method in Section~\ref{sec_crlb}.
It can be concluded that the proposed CPS method is 1$\sim$3 orders of magnitudes more accurate than the compared methods.
From the histogram of estimation error, the variance of the estimation error can approach the theoretical precision limit.



\subsection{Different Settings}

\begin{table*}[ht]
	\centering
	\renewcommand{\arraystretch}{1.1}
	\caption{
    The theoretical precision limits and the performance of CPS under different settings. 
    }
	\label{table_settings}
		\begin{tabular}{@{}lllllllll@{}}
      \toprule
      System & Snapshots & Sensors & Noise & \tabincell{l}{Prior topology\\ knowledge} &\tabincell{l}{Topology pre-\\cision limits} &\tabincell{l}{Estimated topo-\\logy precision} & \tabincell{l}{Admittance pre-\\csion limits} & \tabincell{l}{Estimated admit-\\tance precision} \\ \midrule
      33-bus & 120 & $P$, $Q$, $V$, $\theta$ & 0.1   & No & 0 & 0 & 0.264 & 0.236 \\
      33-bus & 120 & $P$, $Q$, $V$, $\theta$ & 0.2   & Yes & 0 & 0 & 0.475 & 0.521 \\
      33-bus & 120 & $P$, $Q$, $V$, $\theta$ & 0.2   & No & 3.125 & 3.125 & 0.701 & 0.746 \\
      33-bus & 120 & $P$, $Q$, $V$, $\theta$ & 0.5   & Yes & 0 & 0 & 1.188 & 1.370 \\
      33-bus & 120 & $P$, $Q$, $V$ & 0.1   & Yes & 0 & 0 & 28.711 & 29.805 \\ 
      141-bus & 200 & $P$, $Q$, $V$, $\theta$ & 0.1   & Yes & 0 & 0 & 2.346 & 2.823 \\ 
      141-bus & 200 & $P$, $Q$, $V$ & 0.01   & Yes & 0 & 0 & 21.747 & 33.625 \\ 
      \bottomrule
      \end{tabular}
\end{table*}

At last, we use the 33-bus system and the 141-bus system~\cite{khodr2008maximum} to show the precision limits and the performance of the CPS algorithm under different settings.
We compare the following different settings as shown in \textbf{Table~\ref{table_settings}}.
The experimental settings include the types of sensors, the noise level, and the prior topology knowledge.
All the errors are transformed to a scale of 100\%.
We use the ratio of wrong branches number to the true branches number as the error of topology estimation.
We use the relative geometric mean as the error of admittance estimation.
The prior topology knowledge denotes only the connectivity of existing transmission lines or tie lines are unknown.
That is, we know the disconnection of the branches that do not exist.
As can be concluded from \textbf{Table~\ref{table_settings}}, the proposed CPS method can approach the precision limits under various settings.
Still, there are some fundamental limits from the results:
1) the estimation error will largely increase (more than one degree of magnitude) without the phasor measurements;
2) the prior topology knowledge is good compensation for poor measurements;
3) the estimation of the 141-bus system with larger admittances and more buses is far more difficult than the 33-bus system and requires more accurate measurements.


\section{Conclusions}
\label{sec_conclusions}

Our work quantifies the theoretical precision limits for the distribution grids TAE problem.
We can identify the best possible estimation precisions under some specific measurement settings, i.e. different measurement devices, noise levels, number of measurements, and prior knowledge.
We also demonstrate the TAE problem is more difficult than the SE problem, in terms of the non-convexity and the ill-conditioning.
For the TAE problem, the widely applied Adam or Newton method can fail even with a very simple 3-bus system.
We propose a CPS algorithm to adaptively combine the stability of the first-order optimizer and the fast convergence of the second-order optimizer.
We also enable large scale second-order optimization by decoupling the calculation from different snapshots.
Case studies on IEEE 33 and 141-bus systems validate the proposed CPS method can approach the theoretical precision limits under different settings.
We also conclude some fundamental limits from our theory.
For example, the estimation error will largely increase without phasor measurements and can be compensated by the prior topology knowledge.
Future works will explore more theoretical relationships among the measurement devices, the measurement noise, the number of snapshots, and the prior knowledge.

\bibliography{IEEEabrv,myReference}

\begin{thebibliography}{10}
\providecommand{\url}[1]{#1}
\csname url@samestyle\endcsname
\providecommand{\newblock}{\relax}
\providecommand{\bibinfo}[2]{#2}
\providecommand{\BIBentrySTDinterwordspacing}{\spaceskip=0pt\relax}
\providecommand{\BIBentryALTinterwordstretchfactor}{4}
\providecommand{\BIBentryALTinterwordspacing}{\spaceskip=\fontdimen2\font plus
\BIBentryALTinterwordstretchfactor\fontdimen3\font minus
  \fontdimen4\font\relax}
\providecommand{\BIBforeignlanguage}[2]{{%
\expandafter\ifx\csname l@#1\endcsname\relax
\typeout{** WARNING: IEEEtran.bst: No hyphenation pattern has been}%
\typeout{** loaded for the language `#1'. Using the pattern for}%
\typeout{** the default language instead.}%
\else
\language=\csname l@#1\endcsname
\fi
#2}}
\providecommand{\BIBdecl}{\relax}
\BIBdecl

\bibitem{muruganantham2017challenges}
B.~Muruganantham, R.~Gnanadass, and N.~Padhy, ``Challenges with renewable
  energy sources and storage in practical distribution systems,''
  \emph{Renewable and Sustainable Energy Reviews}, vol.~73, pp. 125--134, 2017.

\bibitem{bolognani2013identification}
S.~Bolognani, N.~Bof, D.~Michelotti, R.~Muraro, and L.~Schenato,
  ``Identification of power distribution network topology via voltage
  correlation analysis,'' in \emph{52nd IEEE Conference on Decision and
  Control}.\hskip 1em plus 0.5em minus 0.4em\relax IEEE, 2013, pp. 1659--1664.

\bibitem{weng2016distributed}
Y.~Weng, Y.~Liao, and R.~Rajagopal, ``Distributed energy resources topology
  identification via graphical modeling,'' \emph{IEEE Transactions on Power
  Systems}, vol.~32, no.~4, pp. 2682--2694, 2016.

\bibitem{deka2016estimating}
D.~Deka, S.~Backhaus, and M.~Chertkov, ``Estimating distribution grid
  topologies: A graphical learning based approach,'' in \emph{2016 Power
  Systems Computation Conference (PSCC)}.\hskip 1em plus 0.5em minus
  0.4em\relax IEEE, 2016, pp. 1--7.

\bibitem{yu2017patopa}
J.~Yu, Y.~Weng, and R.~Rajagopal, ``Patopa: A data-driven parameter and
  topology joint estimation framework in distribution grids,'' \emph{IEEE
  Transactions on Power Systems}, vol.~33, no.~4, pp. 4335--4347, 2017.

\bibitem{yu2018patopaem}
------, ``Patopaem: A data-driven parameter and topology joint estimation
  framework for time-varying system in distribution grids,'' \emph{IEEE
  Transactions on Power Systems}, vol.~34, no.~3, pp. 1682--1692, 2018.

\bibitem{zhang2020topology}
J.~Zhang, Y.~Wang, Y.~Weng, and N.~Zhang, ``Topology identification and line
  parameter estimation for non-pmu distribution network: A numerical method,''
  \emph{IEEE Transactions on Smart Grid}, 2020.

\bibitem{moffat2019unsupervised}
K.~Moffat, M.~Bariya, and A.~Von~Meier, ``Unsupervised impedance and topology
  estimation of distribution networks—limitations and tools,'' \emph{IEEE
  Transactions on Smart Grid}, vol.~11, no.~1, pp. 846--856, 2019.

\bibitem{li2020learning}
T.~Li, L.~Werner, and S.~H. Low, ``Learning graphs from linear measurements:
  Fundamental trade-offs and applications,'' \emph{IEEE Transactions on Signal
  and Information Processing over Networks}, vol.~6, pp. 163--178, 2020.

\bibitem{deka2020joint}
D.~Deka, M.~Chertkov, and S.~Backhaus, ``Joint estimation of topology and
  injection statistics in distribution grids with missing nodes,'' \emph{IEEE
  Transactions on Control of Network Systems}, 2020.

\bibitem{park2020learning}
S.~Park, D.~Deka, S.~Backhaus, and M.~Chertkov, ``Learning with end-users in
  distribution grids: Topology and parameter estimation,'' \emph{IEEE
  Transactions on Control of Network Systems}, 2020.

\bibitem{miao2019distribution}
X.~Miao, M.~Ili{\'c}, X.~Wu, and U.~M{\"u}nz, ``Distribution grid admittance
  estimation with limited non-synchronized measurements,'' in \emph{2019 IEEE
  Power \& Energy Society General Meeting (PESGM)}.\hskip 1em plus 0.5em minus
  0.4em\relax IEEE, 2019, pp. 1--5.

\bibitem{zhao2020full}
J.~Zhao, L.~Li, Z.~Xu, X.~Wang, H.~Wang, and X.~Shao, ``Full-scale distribution
  system topology identification using markov random field,'' \emph{IEEE
  Transactions on Smart Grid}, 2020.

\bibitem{xu2019data}
H.~Xu, A.~D. Dom{\'\i}nguez-Garc{\'\i}a, and P.~W. Sauer, ``Data-driven
  coordination of distributed energy resources for active power provision,''
  \emph{IEEE Transactions on Power Systems}, vol.~34, no.~4, pp. 3047--3058,
  2019.

\bibitem{sandraz2014energy}
J.~P.~A. Sandraz, R.~Macwan, M.~Diaz-Aguil{\'o}, J.~McClelland, F.~De~Leon,
  D.~Czarkowski, and C.~Comack, ``Energy and economic impacts of the
  application of cvr in heavily meshed secondary distribution networks,''
  \emph{IEEE transactions on power delivery}, vol.~29, no.~4, pp. 1692--1700,
  2014.

\bibitem{bhela2017enhancing}
S.~Bhela, V.~Kekatos, and S.~Veeramachaneni, ``Enhancing observability in
  distribution grids using smart meter data,'' \emph{IEEE Transactions on Smart
  Grid}, vol.~9, no.~6, pp. 5953--5961, 2017.

\bibitem{grotas2019power}
S.~Grotas, Y.~Yakoby, I.~Gera, and T.~Routtenberg, ``Power systems topology and
  state estimation by graph blind source separation,'' \emph{IEEE Transactions
  on Signal Processing}, vol.~67, no.~8, pp. 2036--2051, 2019.

\bibitem{kay1993fundamentals}
S.~M. Kay, \emph{Fundamentals of statistical signal processing}.\hskip 1em plus
  0.5em minus 0.4em\relax Prentice Hall PTR, 1993.

\bibitem{abur2004power}
A.~Abur and A.~G. Exposito, \emph{Power system state estimation: theory and
  implementation}.\hskip 1em plus 0.5em minus 0.4em\relax CRC press, 2004.

\bibitem{liserre2007grid}
M.~Liserre, F.~Blaabjerg, and R.~Teodorescu, ``Grid impedance estimation via
  excitation of $ lcl $-filter resonance,'' \emph{IEEE Transactions on Industry
  Applications}, vol.~43, no.~5, pp. 1401--1407, 2007.

\bibitem{cavraro2019inverter}
G.~Cavraro and V.~Kekatos, ``Inverter probing for power distribution network
  topology processing,'' \emph{IEEE Transactions on Control of Network
  Systems}, vol.~6, no.~3, pp. 980--992, 2019.

\bibitem{miao2019multi}
X.~Miao, X.~Wu, U.~M{\"u}nz, and M.~Ili{\'c}, ``Multi-layered grid admittance
  matrix estimation for electric power systems with partial measurements,'' in
  \emph{2019 American Control Conference (ACC)}.\hskip 1em plus 0.5em minus
  0.4em\relax IEEE, 2019, pp. 1288--1294.

\bibitem{boyd2004convex}
S.~Boyd, S.~P. Boyd, and L.~Vandenberghe, \emph{Convex optimization}.\hskip 1em
  plus 0.5em minus 0.4em\relax Cambridge university press, 2004.

\bibitem{wang2018power}
G.~Wang, A.~S. Zamzam, G.~B. Giannakis, and N.~D. Sidiropoulos, ``Power system
  state estimation via feasible point pursuit: Algorithms and cram{\'e}r-rao
  bound,'' \emph{IEEE Transactions on Signal Processing}, vol.~66, no.~6, pp.
  1649--1658, 2018.

\bibitem{damavandi2015robust}
M.~G. Damavandi, V.~Krishnamurthy, and J.~R. Mart{\'\i}, ``Robust meter
  placement for state estimation in active distribution systems,'' \emph{IEEE
  Transactions on Smart Grid}, vol.~6, no.~4, pp. 1972--1982, 2015.

\bibitem{xygkis2016fisher}
T.~C. Xygkis, G.~N. Korres, and N.~M. Manousakis, ``Fisher information-based
  meter placement in distribution grids via the d-optimal experimental
  design,'' \emph{IEEE Transactions on Smart Grid}, vol.~9, no.~2, pp.
  1452--1461, 2016.

\bibitem{kingma2014adam}
D.~P. Kingma and J.~Ba, ``Adam: A method for stochastic optimization,''
  \emph{arXiv preprint arXiv:1412.6980}, 2014.

\bibitem{ben2003generalized}
A.~Ben-Israel and T.~N. Greville, \emph{Generalized inverses: theory and
  applications}.\hskip 1em plus 0.5em minus 0.4em\relax Springer Science \&
  Business Media, 2003, vol.~15.

\bibitem{armijo1966minimization}
L.~Armijo, ``Minimization of functions having lipschitz continuous first
  partial derivatives,'' \emph{Pacific Journal of mathematics}, vol.~16, no.~1,
  pp. 1--3, 1966.

\bibitem{cer2012cer}
\emph{CER Smart Metering Project—Electricity Customer Behaviour Trial,
  2009-2010}, Commission for Energy Regulation, Dublin, Ireland, 2012.

\bibitem{zimmerman2010matpower}
R.~D. Zimmerman, C.~E. Murillo-S{\'a}nchez, and R.~J. Thomas, ``Matpower:
  Steady-state operations, planning, and analysis tools for power systems
  research and education,'' \emph{IEEE Transactions on power systems}, vol.~26,
  no.~1, pp. 12--19, 2010.

\bibitem{liu2018data}
Y.~Liu, N.~Zhang, Y.~Wang, J.~Yang, and C.~Kang, ``Data-driven power flow
  linearization: A regression approach,'' \emph{IEEE Transactions on Smart
  Grid}, vol.~10, no.~3, pp. 2569--2580, 2018.

\bibitem{baran1989network}
M.~E. Baran and F.~F. Wu, ``Network reconfiguration in distribution systems for
  loss reduction and load balancing,'' \emph{IEEE Power Engineering Review},
  vol.~9, no.~4, pp. 101--102, 1989.

\bibitem{khodr2008maximum}
H.~Khodr, F.~Olsina, P.~De~Oliveira-De~Jesus, and J.~Yusta, ``Maximum savings
  approach for location and sizing of capacitors in distribution systems,''
  \emph{Electric Power Systems Research}, vol.~78, no.~7, pp. 1192--1203, 2008.

\end{thebibliography}

\clearpage


\begin{appendices}
\section{Proof of \textbf{Theorem~\ref{theo_CRLB}}}
\label{appen_proof_CRLB}
The negative log-likelihood distribution of problem~\eqref{eq_measure_model} is the multiplication of the following Gaussian distributions:
\begin{subequations}\label{eq_log_like}
  \begin{flalign}\label{eq_log_like_a}
    &
    p(\bm{z}, \bm{x})
    =
    -\ln \prod_{m=1}^M
    \frac{1}{\sqrt{2\pi}}
    \exp \left[
      - \frac{(z_m-h_m(\bm{x}))^2}{2\sigma_m^2}
    \right]
    &
  \end{flalign}
  \begin{flalign}\label{eq_log_like_b}
    &
    \qquad
    \quad
    =
    \sum_{m=1}^M
    \left[
      \frac{(z_m-h_m(\bm{x}))^2}{2\sigma_m^2}
      +
      \frac{\ln(2\pi\sigma_m^2)}{2}
    \right].
    &
  \end{flalign}    
\end{subequations}
The Fisher information matrix is defined as the expectation of the Hessian matrix of $p(\bm{z}, \bm{x})$:
\begin{subequations}\label{eq_Hessian}
  \begin{flalign}\label{eq_Hessian_a}
    &
    \bm{F}
    =
    \mathbb{E}_{\bm{x}}
    \left[
      \frac{\partial^2}{\partial\bm{x}^T\partial\bm{x}}
      p(\bm{z}, \bm{x})
    \right]
    &
  \end{flalign}
  \begin{flalign}\label{eq_Hessian_b}
    &
    \quad
    =
    \mathbb{E}_{\bm{x}}
    \left[
      \frac{\partial}{\partial\bm{x}^T}
      \sum_{m=1}^M
      \left(
        -\frac{z_m-h_m(\bm{x})}{\sigma_m^2}
        \times
        \frac{\partial h_m(\bm{x})}{\partial\bm{x}}
      \right)
    \right]
    &
  \end{flalign}    
  \begin{flalign}\label{eq_Hessian_b}
    &
    \quad
    =
    \mathbb{E}_{\bm{x}}
    \left[
      \sum_{m=1}^{M}
      \frac{1}{\sigma_m^2}\times
      \frac{\partial^2 h_m(\bm{x})}{\partial\bm{x}^T\partial\bm{x}}
      \left[
      1-(z_m-h_m(\bm{x}))
      \right]
    \right].
    &
  \end{flalign}    
\end{subequations}
Since the Gaussian noise distribution is symmetric to zero, the expectation of $z_m-h_m(\bm{x})$ over the whole distribution is zero.
Therefore, we have:
\begin{equation}
  \bm{F}=\frac{\partial h^T(\bm{x})}{\partial \bm{x}}
  \times
  \frac{1}{\bm{\sigma}_m^2}
  \times
  \frac{\partial h(\bm{x})}{\partial \bm{x}^T}.
\end{equation}
According to the CRLB~\cite{kay1993fundamentals}, the covariance matrix is lower bounded by the inverse of Fisher information matrix.


\section{Method to obtain the initial value}
\label{appen_proof_pf_simple}

We extend the method in~\cite{zhang2020topology} to obtain the initial value, assuming there are power injection measurements $(P_i^t, Q_i^t)$ and voltage magnitude measurements $V_i^t$.
The reason for this setting is the phasor measurements $\theta_i^t$ are usually not available in distribution grids.
We also assume the topology is radial at the initial stage, which can be relaxed in the following iteration stages.
The initial value of $\bm{x}$ begins by setting the voltage magnitudes measurement as the initial value of $[\{\hat{V}_i^t\}]$.
Then, we estimate an initial value of $\left[\left\{g_{ij}\right\}\left\{b_{ij}\right\}\right]$ from the $P_i^t$, $Q_i^t$, and $V_i^t$ measurements.
At last, we obtain the initial value of $[\{\hat{\theta}_i^t\}]$ by the DC power flow calculation.

The main difficulty is how to estimate the $\left[\left\{g_{ij}\right\}\left\{b_{ij}\right\}\right]$ without the voltage angle measurements.

We start by estimating a radial topology using the voltage magnitudes by a tree structure topology construction.
We use the average value of the voltage magnitudes to obtain the root-leaf relationship of the distribution grids.
We assume the voltage magnitudes are decreasing from the root to the leaf nodes.
We can only use the measurements at the night to avoid the possible bidirectional power flow caused by distributed PV generation.
We then calculate the moving average of the voltage magnitudes to decrease the influence of measurement noise:
\begin{equation}
  \hat{V}_i^1=\frac{1}{l}\sum_{t=1}^{l}V_i^t.
\end{equation}
Lastly, we use the correlation coefficient to construct the topology.
It is under the assumption that the voltage magnitudes of the connected buses are highly correlated~\cite{weng2016distributed,deka2020joint}.
See \textbf{Algorithm~\ref{algorithm_tree_topology}} for details.
Note that the proposed method may not obtain an accurate result because the assumptions may not be satisfied in practice.
However, \textbf{Algorithm~\ref{algorithm_tree_topology}} only provides an initial value and the inaccuracy can be corrected in the following iterations.
Further, some recent works also provide some topology estimation methods~\cite{weng2016distributed,deka2020joint,zhao2020full} and can be used to obtain the initial topology.

\begin{algorithm}[ht]
	\caption{Tree Structure Topology Construction}
	\label{algorithm_tree_topology}
	{\bf Inputs: }
	The voltage magnitude measurements of different buses $[V_1,...,V_i,...,V_N]$.
			
  \begin{algorithmic}[1]
  \STATE Initialize the topology set $\mathcal{S_T}=\emptyset$. Sort the average voltage magnitudes in a descending order $[V_{d1},...,V_{di},...,V_{dN}]$.
  \STATE Get the $l$ moving average of the voltage magnitudes $[\hat{V}_{d1},...,\hat{V}_{di},...,\hat{V}_{dN}]$.
  \FOR {$i=2\sim N$} 
    \STATE Get the bus $j$ with maximum correlation coefficient $\arg\max_{j\in 1\sim i-1}Corr(\hat{V}_{di}, \hat{V}_{dj})$.
    \STATE $\mathcal{S_T}\leftarrow({di}, {dj})$.
  \ENDFOR
	\end{algorithmic}
		
	{\bf Outputs: } 
	The topology set $\mathcal{S_T}$.
\end{algorithm}

We then estimate line admittance by introducing an approximate power flow formulation without voltage angles.

\begin{theorem}[Phasor-free power flow]
  \label{theo_pf_simple}
  The power flow equations can be approximated as~\eqref{eq_pf_simple} under the assumption that $\alpha_{ij}=P_{ij}/Q_{ij}$ is a constant and $\sin\theta_{ij}\approx\theta_{ij}$.
  \begin{subequations}\label{eq_pf_simple}
    \begin{equation}\label{eq_pf_simple_a}
      P_{i}/V_{i}=(V_i-V_j)/z^{\#}_{ij},
    \end{equation}
    \begin{equation}\label{eq_pf_simple_b}
      Q_{i}/V_{i}=(V_i-V_j)/(z^{\#}_{ij}\alpha{ij}),
    \end{equation}
  \end{subequations}
  with the augmented impedance $z^{\#}_{ij}$ defined as:
  \begin{equation}
    z^{\#}_{ij}=
    \frac{g_{ij}-b_{ij}/\alpha_{ij}}{g_{ij}^2+b_{ij}^2}.
  \end{equation}
\end{theorem}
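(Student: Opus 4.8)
The plan is to start from the exact single-line complex power flow and reduce it using the two stated approximations. First I would write the series current on line $(i,j)$, neglecting shunt as in~\eqref{eq_bus_line}, as $I_{ij}=y_{ij}(V_i e^{j\theta_i}-V_j e^{j\theta_j})$ with $y_{ij}=g_{ij}+jb_{ij}$, and express the complex injected power as $S_{ij}=P_{ij}+jQ_{ij}=V_i e^{j\theta_i}I_{ij}^{*}$. Eliminating $I_{ij}$ and multiplying through by $e^{-j\theta_i}$ gives the compact relation $y_{ij}\left(V_i-V_j e^{-j\theta_{ij}}\right)=(P_{ij}-jQ_{ij})/V_i$, where $\theta_{ij}=\theta_i-\theta_j$. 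This isolates the single place where the unknown phase angle enters.

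Next I would linearize the angle term. Using $\sin\theta_{ij}\approx\theta_{ij}$ together with $\cos\theta_{ij}\approx 1$, we have $e^{-j\theta_{ij}}\approx 1-j\theta_{ij}$, so $V_i-V_j e^{-j\theta_{ij}}\approx (V_i-V_j)+jV_j\theta_{ij}$. Multiplying both sides by the line impedance $z_{ij}=1/y_{ij}=(g_{ij}-jb_{ij})/(g_{ij}^2+b_{ij}^2)=r_{ij}+jx_{ij}$ and taking the \emph{real part} then cancels the angle, since $\theta_{ij}$ survives only in the imaginary part. This yields the LinDistFlow-type magnitude relation $V_i(V_i-V_j)=r_{ij}P_{ij}+x_{ij}Q_{ij}$.

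Finally I would invoke the constant-ratio assumption $\alpha_{ij}=P_{ij}/Q_{ij}$ to collapse the two injection terms into one. Substituting $Q_{ij}=P_{ij}/\alpha_{ij}$ gives $V_i(V_i-V_j)=P_{ij}(r_{ij}+x_{ij}/\alpha_{ij})$, and recognizing that $r_{ij}+x_{ij}/\alpha_{ij}=(g_{ij}-b_{ij}/\alpha_{ij})/(g_{ij}^2+b_{ij}^2)=z^{\#}_{ij}$ reproduces~\eqref{eq_pf_simple_a}; dividing by $\alpha_{ij}$ and reusing $Q_{ij}=P_{ij}/\alpha_{ij}$ reproduces~\eqref{eq_pf_simple_b}. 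To pass from the branch flows $P_{ij},Q_{ij}$ to the nodal injections $P_i,Q_i$ appearing in the statement, I would appeal to the radial topology assumed at the initialization stage and neglect line losses, so that each bus injection is carried by its single incident upstream branch.

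The hard part will be the step that identifies the nodal injection with the branch flow: in a general tree the upstream branch carries the aggregate of the whole downstream subtree, not merely $P_i$, so the equality $P_i\approx P_{ij}$ is exact only at leaf buses and otherwise relies on the loss-neglecting, initialization-only nature of the approximation. A secondary subtlety is that the constant-$\alpha$ hypothesis is precisely what makes a single augmented impedance $z^{\#}_{ij}$ sufficient—without it the real-part equation still contains both $P_{ij}$ and $Q_{ij}$ and cannot be inverted for the admittance from magnitude data alone. I would therefore flag both as approximation (rather than identity) steps and note that any residual error is corrected in the subsequent CPS iterations.
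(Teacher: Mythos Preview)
Your derivation is correct and reaches the same identity, but it is packaged differently from the paper. The paper works entirely in real polar coordinates: it writes the two branch-flow equations~\eqref{eq_branch_flow}, forms the linear combination $(\text{P-eqn})\times b_{ij}+(\text{Q-eqn})\times g_{ij}$ to solve explicitly for $\theta_{ij}$ in terms of $P_{ij},Q_{ij}$, and then substitutes this expression back into each equation before invoking $\alpha_{ij}=P_{ij}/Q_{ij}$. You instead stay in complex notation, multiply through by $z_{ij}$, and take the real part so that $\theta_{ij}$ drops out in one stroke, landing directly on the LinDistFlow relation $V_i(V_i-V_j)=r_{ij}P_{ij}+x_{ij}Q_{ij}$; the $\alpha_{ij}$ substitution then gives $z^{\#}_{ij}$ immediately. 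The two routes are algebraically equivalent---your real-part projection is exactly the combination the paper forms to isolate the angle-free piece---but yours is shorter and makes the structure (a LinDistFlow step plus a ratio collapse) more transparent.

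Two points where you are in fact more careful than the paper: you state the additional approximation $\cos\theta_{ij}\approx 1$ explicitly, whereas the paper only lists $\sin\theta_{ij}\approx\theta_{ij}$ in the hypothesis yet must use the cosine approximation (or an equivalent $V_j\!\approx\!V_i$ step) to pass from its substituted expression to $(V_i-V_j)$; and you flag the identification of the nodal injection $P_i$ with the branch flow $P_{ij}$, which the paper silently makes (the theorem writes $P_i$, the proof writes $P_{ij}$). Both observations are accurate and worth keeping in your write-up.
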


\begin{proof}
The branch flow equations are formulated as:
\begin{subequations}\label{eq_branch_flow}
  \begin{equation}\label{eq_branch_flow_a}
    \begin{aligned}
      P_{i}/V_{i}
      &=(V_i-V_j\cos\theta_{ij})g_{ij}-V_j\sin\theta_{ij}b_{ij}\\
      &\approx(V_i-V_j\cos\theta_{ij})g_{ij}-V_j\theta_{ij}b_{ij},
    \end{aligned}
  \end{equation}
  \begin{equation}\label{eq_branch_flow_b}
    \begin{aligned}
      Q_{i}/V_{i}
      &=-(V_i-V_j\cos\theta_{ij})b_{ij}-V_j\sin\theta_{ij}g_{ij}\\
      &\approx-(V_i-V_j\cos\theta_{ij})b_{ij}-V_j\theta_{ij}g_{ij},
    \end{aligned}
  \end{equation}
\end{subequations}
where we use the assumption $\sin\theta_{ij}\approx\theta_{ij}$.
$\eqref{eq_branch_flow_a}\times b_{ij}+\eqref{eq_branch_flow_b}\times g_{ij}$ is:
\begin{subequations}
  \begin{equation}
    \frac{P_{ij}b_{ij}}{V_i}+\frac{Q_{ij}g_{ij}}{V_i}=-(b_{ij}^2+g_{ij}^2)\theta_{ij},
  \end{equation}
  \begin{equation}\label{eq_theta}
    \theta_{ij}=-\frac{1}{V_i}
    \left(
      \frac{b_{ij}P_{ij}}{b_{ij}^2+g_{ij}^2}+\frac{g_{ij}Q_{ij}}{b_{ij}^2+g_{ij}^2}
    \right).
  \end{equation}
\end{subequations}
Substitute~\eqref{eq_theta} into~\eqref{eq_branch_flow_a} and substitute~\eqref{eq_theta} into~\eqref{eq_branch_flow_b}:
\begin{subequations}\label{eq_pf_simple_appen}
  \begin{equation}\label{eq_pf_simple_appen_a}
    \frac{P_{ij}}{V_i}=\frac{g_{ij}^2+b_{ij}^2}{g_{ij}-\alpha_{ij}^{-1}b_{ij}}(V_i-V_j),
  \end{equation}
  \begin{equation}\label{eq_pf_simple_appen_b}
    \frac{Q_{ij}}{V_i}=-\frac{g_{ij}^2+b_{ij}^2}{b_{ij}-\alpha_{ij} g_{ij}}(V_i-V_j),
  \end{equation}
\end{subequations}
where $\alpha_{ij}=P_{ij}/Q_{ij}$.
\end{proof}

Afterwards, we can estimate the value of $1/z^{\#}_{ij}$ and $1/(z^{\#}_[ij]\alpha_{ij})$ by the least squares estimation and set them as the initial value of $g_{ij}$ and $b_{ij}$, respectively.
\begin{subequations}\label{eq_least_squares}
  \begin{equation}\label{eq_least_squares_a}
    g_{ij}=(P_i/V_i)^T(V_{i}-V_{j})/[(V_{i}-V_{j})(V_{i}-V_{j})^T)],
  \end{equation}
  \begin{equation}\label{eq_least_squares_b}
    b_{ij}=(Q_i/V_i)^T(V_{i}-V_{j})/[(V_{i}-V_{j})(V_{i}-V_{j})^T)].
  \end{equation}
\end{subequations}

\end{appendices}

\ifCLASSOPTIONcaptionsoff
  \newpage
\fi

\end{document}